\newcommand{\RN}[1]{%
	\textup{\expandafter{\romannumeral#1}}%
}
\tikzset{
	block/.style   = {draw, thick, rectangle, minimum width = 1em},
	sblock/.style  = {draw, thick, rectangle, minimum height = 6em, minimum width = 6em}, 
	lgblock/.style = {draw, thick, rectangle, minimum height = 9em, minimum width = 6em}, 	
	vblock/.style  = {draw, thick, rectangle, minimum height = 3.7em, minimum width = 1.8em}, 
}
\tikzset{XOR/.style={draw,circle,append after command={
			[shorten >=\pgflinewidth, shorten <=\pgflinewidth,]
			(\tikzlastnode.north) edge (\tikzlastnode.south)
			(\tikzlastnode.east) edge (\tikzlastnode.west)
		}
	}
}
\newcommand\remove[1]{}
\newtheorem{theorem}{Theorem}
\newtheorem{lemma}{Lemma}
\newtheorem{cnstr}{Construction}
\newcommand{\Fq}{\mathbb{F}_{q}}
\newcommand{\ff}{{\mathbb F}}
\newcommand{\bP}{\mathbb{P}}
\newcommand{\cA}{\mathcal{A}}
\newcommand{\cC}{\mathcal{C}}
\newcommand{\cY}{\mathcal{Y}}
\DeclareMathOperator*{\argmax}{argmax}
\DeclareMathOperator{\Polar}{Polar}
\DeclareMathOperator{\SER}{SER}
\DeclareMathOperator{\qSC}{qSC}
\DeclareMathOperator{\qEC}{qEC}
\begin{document}

\title{All the codeword symbols in polar codes have the same SER under the SC decoder}

\author{Guodong Li \and  \hspace*{.5in} Min Ye  \and  \hspace*{.5in}  Sihuang Hu}

\maketitle
{\renewcommand{\thefootnote}{}\footnotetext{

		\vspace{-.2in}

		\noindent\rule{1.5in}{.4pt}

		Research partially funded by National Key R\&D Program of China under Grant No. 2021YFA1001000, National Natural Science Foundation of China under Grant No. 12001322, Shandong Provincial Natural Science Foundation under Grant No. ZR202010220025, and a Taishan scholar program of Shandong Province.

		Guodong Li is with School of Cyber Science and Technology, Shandong University, Qingdao, Shandong, 266237, China.
		Email: guodongli@mail.sdu.edu.cn

		Min Ye is with Tsinghua-Berkeley Shenzhen Institute, Tsinghua Shenzhen International Graduate School, Shenzhen 518055, China.
		Email: yeemmi@gmail.com

		Sihuang Hu is with  Key Laboratory of Cryptologic Technology and Information Security, Ministry of Education, Shandong University, Qingdao, Shandong, 266237, China and School of Cyber Science and Technology, Shandong University, Qingdao, Shandong, 266237, China.
		Email: husihuang@sdu.edu.cn
	}
}

\renewcommand{\thefootnote}{\arabic{footnote}}
\setcounter{footnote}{0}

\begin{abstract}
We consider polar codes constructed from the $2\times 2$ kernel $\begin{bmatrix}
		1      & 0 \\
		\alpha & 1
	\end{bmatrix}$ over a finite field $\Fq$, where $q=p^s$ is a power of a prime number $p$, and $\alpha$ satisfies that $\ff_p(\alpha) = \Fq$.
We prove that for any $\Fq$-symmetric memoryless channel, any code length, and any code dimension, all the codeword symbols in such polar codes have the same symbol error rate (SER) under the successive cancellation (SC) decoder.
\end{abstract}

\section{Introduction}
Polar codes were proposed by Ar{\i}kan in \cite{Arikan09}, and they have been extensively studied over the last decade.
In his original paper \cite{Arikan09}, Ar{\i}kan introduced the successive cancellation (SC) decoder to decode polar codes, and he proved that polar codes achieve the capacity of any binary-input memoryless symmetric (BMS) channel under the SC decoder.
Later, the successive cancellation list (SCL) decoder and the CRC-aided SCL decoder were proposed to further reduce the decoding error probability of polar codes \cite{Tal15,Niu12}.

Although the CRC-aided SCL decoder provides state-of-the-art performance in terms of the decoding error probability, the SC decoder still receives a lot of research attention due to the following two reasons.
First, it is amenable to theoretical analysis.
In fact, a large part of theoretical research on polar codes focuses on the performance of the SC decoder.
Second, the running time of the SC decoder is much smaller than the (CRC-aided) SCL decoder.
Therefore, SC decoder is the better choice when there is a stringent requirement on the delay of the communication system.

In an early paper \cite{Arikan11}, Ar{\i}kan observed an interesting experimental result regarding the decoding performance of the SC decoder for binary polar codes: The average bit error rate (BER) of a subset of codeword coordinates is much smaller than the average BER of the message bits.
Even till today, there is no rigorous analysis that can explain this phenomenon.
As we repeated the experiments in \cite{Arikan11}, we found another interesting phenomenon---the BER of each codeword coordinates in polar codes is surprisingly stable under the SC decoder.
This is not a coincidence. In fact, Theorem~2 of \cite{Geiselhart21} implies that {\em all the codeword coordinates in binary polar codes have the same BER under the SC decoder}, although this conclusion was never explicitly mentioned in \cite{Geiselhart21} or any other papers.

In this paper, we extend the above conclusion to polar codes over finite fields. Specifically, we consider polar codes constructed from the $2\times 2$ kernel $\begin{bmatrix}
		1      & 0 \\
		\alpha & 1
\end{bmatrix}$ over a finite field $\Fq$, where $q=p^s$ is a power of a prime number $p$, and $\alpha$ satisfies that $\ff_p(\alpha) = \Fq$. According to \cite[Corollary 15]{Mori14},
$\ff_p(\alpha) = \Fq$ is a necessary and sufficient condition for the constructed codes to polarize.
We prove that for any $\ff_q$-symmetric memoryless channel, any code length, and any code dimension, all the codeword symbols in polar codes have exactly the same symbol error rate (SER) under the SC decoder.

The rest of this paper is organized as follows.
In Section~\ref{sect:2}, we provide the background on polar codes and state the main result.
In Section~\ref{sect:3}, we prove the main result.
In Section~\ref{sect:4}, we discuss the connection between our results and the observation in \cite{Arikan11}.

\section{Background and the main result} \label{sect:2}

For a set $\cA=\{i_1,i_2,\dots,i_s\}$ of size $s$, we use $x_{\cA}$ to denote the vector $(x_{i_1},x_{i_2},\dots,x_{i_s})$, where we assume that $i_1<i_2<\dots<i_s$ are nonnegative integers.
We use $x_{[a:b]}$ to denote the vector $(x_a,x_{a+1},x_{a+2},\dots,x_b)$.
Let $q = p^s$, where $p$ is a prime number and $s$ is a positive integer. We say that a memoryless channel $W:\Fq\to\cY$ is $\Fq$-symmetric if it satisfies the following two conditions: (1) there exist $q$ permutations $\{\sigma_b: b\in\Fq\}$ on the output alphabet $\cY$ such that $W[y|x] = W[\sigma_{x'-x}(y)|x']$ for all $y\in\cY$ and all $x,x'\in\Fq$;
(2) there exist $q-1$ permutations $\{\pi_a: a\in\Fq^*\}$ on $\cY$ such that $W[y|x] = W[\pi_a(y)|a x]$ for all $y\in \cY$, $x\in\Fq$, and $a\in\Fq^*$. Both $\qSC$ and $\qEC$ satisfy these two conditions. Below we use the shorthand notation 
\begin{equation} \label{eq:sth}
y+b=\sigma_b(y) \text{~~and~~}
a\cdot y=\pi_a(y)
\quad \text{for~} 
a\in\Fq^* ,  b\in\Fq, y\in\cY.
\end{equation}

The construction of polar codes with code length $n=2^m$ involves the $n\times n$ matrix
$\mathbf{G}_n=\begin{bmatrix}
		1      & 0 \\
		\alpha & 1
	\end{bmatrix}^{\otimes m}$,
where $\alpha$ satisfies that $\ff_p(\alpha) = \Fq$, and $\otimes$ is the Kronecker product. According to \cite[Corollary 15]{Mori14},
$\ff_p(\alpha) = \Fq$ is a necessary and sufficient condition for the polarization of the constructed codes.
The matrix $\mathbf{G}_n$ serves as a linear mapping between the message vector $u_{[0:n-1]}$ and the codeword vector $x_{[0:n-1]}$.
The message vector $u_{[0:n-1]}$ consists of information symbols and frozen symbols.
We use $\cA\subseteq\{0,1,\dots,n-1\}$ to denote the index set of information symbols and use $\cA^c=\{0,1,\dots,n-1\}\setminus\cA$ to denote the index set of frozen symbols.
A polar code has four parameters---the code length $n$, the code dimension $k$, the index set $\cA$ of information symbols, and the vector $\bar{u}_{\cA^c}\in\Fq^{n-k}$ of frozen symbols\footnote{We use $\bar{u}$ and its variations to denote the true value.
	We use $\hat{u}$ and its variations to denote the decoded value.}.
More precisely, we define
$$
	\Polar(n,k,\cA,\bar{u}_{\cA^c}) := \{u_{[0:n-1]} \mathbf{G}_n : u_{\cA^c}=\bar{u}_{\cA^c} , u_{\cA}\in\Fq^k \} .
$$

Next let us recall how the SC decoder works.
For a symmetric channel $W:\Fq\to\cY$, we define $W^n:\Fq^n\to\cY^n$ as $W^n(y_{[0:n-1]}|x_{[0:n-1]})=\prod_{i=0}^{n-1}W(y_i|x_i)$ for $x_{[0:n-1]}\in\Fq^n$ and $y_{[0:n-1]}\in\cY^n$.
For $0\le i\le n-1$, we further define the synthetic channel $W_i^{(n)}:\Fq\to\cY^n\times\Fq^i$ as
$$
	W_i^{(n)}(y_{[0:n-1]}, u_{[0:i-1]} |u_i)
	=\frac{1}{q^{n-1}} \sum_{u_{[i+1:n-1]}\in\Fq^{n-i-1}}
	W^n(y_{[0:n-1]}| u_{[0:n-1]} \mathbf{G}_n) .
$$

The SC decoder decodes one by one from $u_0$ to $u_{n-1}$.
If $i\in\cA^c$, then the decoding result of the $i$th symbol is $\hat{u}_i=\bar{u}_i$.
If $i\in\cA$, then the decoding result of the $i$th symbol is
\begin{equation} \label{eq:SCinfo}
	\hat{u}_i= \hat{u}_i(y_{[0:n-1]}, u_{[0:i-1]})
	=\argmax_{u_i\in\Fq} W_i^{(n)}(y_{[0:n-1]}, u_{[0:i-1]} |u_i) .
\end{equation}
We write the elements in $\Fq$ as $\{a_0, a_1, a_2, \dots, a_{q-1}\}$.
If there is a tie, i.e., if
\begin{equation*}
	\begin{aligned}
		  & W_i^{(n)}(y_{[0:n-1]}, u_{[0:i-1]} |a_{i_0})     =\cdots=   & W_i^{(n)}(y_{[0:n-1]}, u_{[0:i-1]} |a_{i_{s-1}})  \\
		> & W_i^{(n)}(y_{[0:n-1]}, u_{[0:i-1]} |a_{i_{s}}) \ge\cdots\ge & W_i^{(n)}(y_{[0:n-1]}, u_{[0:i-1]} |a_{i_{q-1}}),
	\end{aligned}
\end{equation*}
where $\{i_0, i_1, \dots, i_{q-1}\}$ is a permutation of $\{0,1,\dots, q-1\}$,
then the SC decoder outputs a random decoding result with probability $\mathbb{P}(\hat{u}_i=a_{i_0})=\cdots=\mathbb{P}(\hat{u}_i=a_{i_{s-1}})=1/s$.
The decoding results of the SC decoder depend not only on the channel output vector $y_{[0:n-1]}$, but also on the set $\cA$ and the values of the frozen symbols $\bar{u}_{\cA^c}$.
We write the SC decoding result of the message vector as $\hat{u}_{[0:n-1]}(y_{[0:n-1]},\cA,\bar{u}_{\cA^c})$.
The SC decoding result of the codeword vector is
\begin{equation} \label{eq:decw}
	\hat{x}_{[0:n-1]}(y_{[0:n-1]},\cA,\bar{u}_{\cA^c})=\hat{u}_{[0:n-1]}(y_{[0:n-1]},\cA,\bar{u}_{\cA^c})\mathbf{G}_n .
\end{equation}

For a polar code $\cC=\Polar(n,k,\cA,\bar{u}_{\cA^c})$, a symmetric channel $W$, and a specific choice of information vector $\bar{u}_{\cA}\in\Fq^k$, we write the transmitted codeword as $\bar{x}_{[0:n-1]}=\bar{u}_{[0:n-1]} \mathbf{G}_n$.
In this case, the SER of the $j$th codeword symbol under the SC decoder is
\begin{align*}
	  & \SER_j(\cC,W,\bar{u}_{\cA})                                                                                                                                                                           \\
	= & \sum_{y_{[0:n-1]}\in\cY^n} ~\sum_{x_{[0:n-1]}\in\Fq^n} W^n(y_{[0:n-1]}| \bar{x}_{[0:n-1]}) \mathbb{P}(\hat{x}_{[0:n-1]}(y_{[0:n-1]},\cA,\bar{u}_{\cA^c})=x_{[0:n-1]}) \mathbf{1}[x_j\neq \bar{x}_j] .
\end{align*}
The term $\mathbb{P}(\hat{x}_{[0:n-1]}(y_{[0:n-1]},\cA,\bar{u}_{\cA^c})=x_{[0:n-1]})$ appears because the SC decoder involves randomness when there is a tie in \eqref{eq:SCinfo}.
The average SER of the $j$th codeword symbol over all choices of the information vector is
$$
	\SER_j(\cC,W) = \frac{1}{2^k} \sum_{\bar{u}_{\cA}\in\Fq^k}
	\SER_j(\cC,W,\bar{u}_{\cA}) .
$$

For an integer $0\le i\le 2^m-1$, we write its binary expansion as
\begin{equation} \label{eq:bexp}
	i= 2^{m-1}b_{m-1}(i) + 2^{m-2}b_{m-2}(i) +\dots+2b_1(i) + b_0(i),
\end{equation}
where $b_{m-1}(i),b_{m-2}(i),\dots,b_1(i),b_0(i)\in\{0,1\}$.
For two integers $0\le i,j\le 2^m-1$, we say that $i\succeq j$ if $b_r(i)\ge b_r(j)$ for all $0\le r\le m-1$.
Now we are ready to state our main result.
\begin{theorem} \label{thm:main}
	Let $n=2^m$ and $k$ be two positive integers satisfying $k\le n$.
	Let $\cA\subseteq\{0,1,\dots,n-1\}$ be a set of size $|\cA|=k$ satisfying the following condition:
	\begin{equation} \label{eq:condA}
		\text{If } j\in\cA \text{ and } i\succeq j, \text{ then } i\in\cA .
	\end{equation}
	Let $\bar{u}_{\cA^c}\in\Fq^{n-k}$ be a $q$-ary vector of length $n-k$.
	We write $\cC=\Polar(n,k,\cA,\bar{u}_{\cA^c})$.
	Then for any $\ff_q$-symmetric memoryless channel $W$, we have
	$$
		\SER_0(\cC,W)=\SER_1(\cC,W)=\SER_2(\cC,W)=\dots=\SER_{n-1}(\cC,W) .
	$$
\end{theorem}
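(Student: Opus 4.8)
The plan is to reduce the statement in three steps---first removing $\alpha$, then reducing to transmission of the all-zero codeword, and finally exploiting a transitive group of coordinate symmetries. \emph{Step 1 (eliminating $\alpha$).} Expanding the Kronecker product one checks that $(\mathbf{G}_n)_{ij}=\alpha^{\,w(i)-w(j)}\,\mathbf 1[\,i\succeq j\,]$, where $w(i)$ counts the $1$'s among $b_0(i),\dots,b_{m-1}(i)$. Hence a codeword $x=u\mathbf{G}_n$ satisfies $\alpha^{w(j)}x_j=\sum_{i\succeq j}\alpha^{w(i)}u_i$, so the coordinatewise rescalings $x_j\mapsto\alpha^{w(j)}x_j$ and $u_i\mapsto\alpha^{w(i)}u_i$ turn our code into the polar code built from $\begin{bmatrix}1&0\\1&1\end{bmatrix}$ with the same index set $\cA$. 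Rescaling the channel outputs compatibly using the multiplicative symmetries (coordinate $j$ by $\pi_{\alpha^{w(j)}}$), one verifies that the synthetic channels, the frozen set, and the events $\{\hat x_j\neq\bar x_j\}$ transform consistently, so $\SER_j$ is unchanged. We may therefore assume $\alpha=1$, so that $\mathbf{G}_n$ is the $\{0,1\}$-matrix $\begin{bmatrix}1&0\\1&1\end{bmatrix}^{\otimes m}$ and $\cC_0:=\Polar(n,k,\cA,0)$ is linear.

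\emph{Step 2 (reduction to the all-zero codeword).} The additive symmetries together with linearity of $\mathbf{G}_n$ make the SC decoder equivariant under shifts by codewords: $\hat x(y+c,\cA,\bar u_{\cA^c})=\hat x(y,\cA,\bar u_{\cA^c})+c$ for $c\in\cC_0$, as distributions over the tie-breaking coins, and similarly the SER does not depend on the value of $\bar u_{\cA^c}$, so we may take $\bar u_{\cA^c}=0$ and $\cC=\cC_0$. Averaging $\SER_j(\cC_0,W,\bar u_{\cA})$ over $\bar u_{\cA}\in\Fq^k$ is the same as averaging over $\cC_0$, and the shift-equivariance then collapses everything to $\SER_j(\cC,W)=\Pr[\hat X_j\neq 0]$, where the all-zero codeword of $\cC_0$ is sent over $W$ and $\hat X$ is the SC output. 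It remains to show this probability is independent of $j$.

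\emph{Step 3 (coordinate translations).} Condition \eqref{eq:condA} is exactly what makes $\cC_0$ invariant under the coordinate permutations $\pi_c\colon i\mapsto i\oplus c$, $c\in\{0,\dots,n-1\}$; these act transitively on $\{0,\dots,n-1\}$, and the law of $Y\sim W^n(\cdot\mid 0)$ is $\pi_c$-invariant. Hence it suffices to show SC commutes with each $\pi_c$ as a randomized map, $\hat X(\pi_c y)\stackrel{d}{=}\pi_c\hat X(y)$, for then $\pi_c\hat X(Y)\stackrel{d}{=}\hat X(\pi_c Y)\stackrel{d}{=}\hat X(Y)$ and the coordinates of $\hat X(Y)$ are permuted transitively. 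We prove commutation for the generators $\pi_{e_0},\dots,\pi_{e_{m-1}}$ by induction on $m$, via the recursion $\mathbf{G}_n=\begin{bmatrix}\mathbf{G}_{n/2}&0\\\mathbf{G}_{n/2}&\mathbf{G}_{n/2}\end{bmatrix}$: SC first decodes $u_{[0:n/2-1]}$ through $W^-((z,w)\mid c)=\frac1q\sum_b W(z\mid c+b)W(w\mid b)$ and re-encodes it to $\hat a$, then decodes $u_{[n/2:n-1]}$ through the companion channel with $\hat a$ as side information and re-encodes it to $\hat b$, outputting $\hat X=(\hat a+\hat b,\hat b)$. For $r<m-1$, $\pi_{e_r}$ preserves the two halves and acts on each as the length-$(n/2)$ flip $\pi_{e_r}$, so permuting $y$ permutes the phase-$1$ data, hence (induction) $\hat a$, hence the phase-$2$ data, hence (induction) $\hat b$, hence $\hat X$, all by $\pi_{e_r}$. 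For $\pi_{e_{m-1}}$, which swaps the two halves of $y$: the identity $W^-((w,z)\mid c)=W^-((z,w)\mid -c)$, immediate from the additive symmetry, shows that phase $1$ on the swapped input returns $-\hat a$, and then the phase-$2$ data on the swapped input is exactly the original phase-$2$ data with its input translated by $-\hat a$. Since \eqref{eq:condA} forces $\cA\cap[0,n/2-1]\subseteq\{i:n/2+i\in\cA\}$, i.e.\ the phase-$1$ subcode lies inside the phase-$2$ subcode, $-\hat a$ is a codeword of the phase-$2$ subcode, so the shift-equivariance of Step 2 gives that phase $2$ returns $\hat a+\hat b$; thus the swapped run outputs $(-\hat a+(\hat a+\hat b),\,\hat a+\hat b)=(\hat b,\,\hat a+\hat b)$, which is $\hat X$ with its halves exchanged, completing the induction.

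The crux is the bookkeeping in Step 3: one must set up the synthetic-channel recursion, the re-encoding step, and the randomized tie-breaking precisely enough that the phrases ``permuted by $\pi_{e_r}$'' and ``translated by $-\hat a$'' are literal identities of randomized maps, and one must check that the shift-equivariance lemma of Step 2 remains available for the side-information channels that appear inside the recursion (these are no longer of the plain symmetric form, but are symmetric enough for the argument). Steps 1 and 2, and the transitivity and invariance inputs of Step 3, are comparatively routine.
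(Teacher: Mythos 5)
Your proposal is correct and follows essentially the same skeleton as the paper's proof: reduce to the all-zero codeword using the channel symmetries, establish a shift-equivariance lemma for SC outputs, and then prove a coordinate-permutation equivariance by induction on the Kronecker recursion, using the swap identity for $W^-$ and the translation identity for $W^+$ together with $\cA^-\subseteq\cA^+$ (which follows from \eqref{eq:condA}). The one genuinely different ingredient is your Step~1: the closed form $(\mathbf{G}_n)_{ij}=\alpha^{w(i)-w(j)}\mathbf 1[i\succeq j]$ lets you conjugate by the diagonal scalings $u_i\mapsto\alpha^{w(i)}u_i$, $x_j\mapsto\alpha^{w(j)}x_j$, $y_j\mapsto\pi_{\alpha^{w(j)}}(y_j)$ to reduce to the $\alpha=1$ kernel once and for all; this step is correct (the coordinate-wise multiplicative symmetries make the per-coordinate rescalings harmless for $W^n$ and hence for every synthetic channel, and the indicator $\mathbf 1[\hat x_j\neq \bar x_j]$ is scale-invariant). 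The paper instead carries $\alpha$ through the whole argument and folds it into the maps $\xi_r^{(m)}$, which are ``flip the $r$-th bit and rescale each coordinate by $-\alpha^{\pm 1}$''; after your reduction those scalings collapse to a global $-1$, which is then absorbed by the shift-equivariance lemma (your Step~2), so your equivariance is under the bare permutations $i\mapsto i\oplus c$ rather than the scaled permutations. The two are consistent (they differ by an application of shift-equivariance with $a=-1$), so this is a notational simplification, not a different mechanism. You are right to flag the two bookkeeping issues at the end: (a) the shift-equivariance lemma must be re-derived for the synthetic channels $W^-$ and $W^+$, which requires checking that these remain $\Fq$-symmetric (the paper does this in the appendix and chooses the correct permutations $\sigma_b,\pi_a$ for $W^\pm$); and (b) the ``swap $\Rightarrow$ negate'' step for phase~1 and the ``translate by $\hat a$'' step for phase~2 must be phrased as identities of transition-probability vectors, not sets, so that they really force equality of the randomized argmax distributions. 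Both are resolvable exactly as you sketch, so there is no gap in the plan, only work to be written out.
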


We need the following lemma to apply Theorem~\ref{thm:main} to polar codes.

\begin{lemma}
	The set $\cA$ in polar code construction always satisfies the condition \eqref{eq:condA}.
\end{lemma}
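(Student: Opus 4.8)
The plan is to recognize condition~\eqref{eq:condA} as the well-known \emph{nesting} (partial-order) property of the information set, and to reduce it to a single structural fact about the synthetic channels: whenever $i\succeq j$, the channel $W_i^{(n)}$ is at least as reliable as $W_j^{(n)}$. Recall that in the polar code construction $\cA$ is taken to be a set of $k$ indices $i$ whose synthetic channels $W_i^{(n)}$ are the most reliable---for instance the $k$ indices of smallest error probability $P_e(W_i^{(n)})$ under the associated MAP/SC rule; any of the standard, degradation-monotone selection rules (largest symmetric capacity, smallest Bhattacharyya parameter, $\dots$) leads to the same conclusion. So it suffices to prove
$$
  i\succeq j\ \Longrightarrow\ P_e\bigl(W_i^{(n)}\bigr)\le P_e\bigl(W_j^{(n)}\bigr);
$$
then $j\in\cA$ and $i\succeq j$ force $i\in\cA$ (when the $P_e(W_i^{(n)})$ happen to coincide, the tie is broken consistently with $\succeq$, which is always possible).

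First I would set up the recursion. Write $W\mapsto W^{-}$ for the synthetic channel of the first input of the kernel and $W\mapsto W^{+}$ for the synthetic channel of the second input, with the first input supplied as side information. One has the one-step recursion $W_{2i}^{(2n)}=(W_i^{(n)})^{-}$ and $W_{2i+1}^{(2n)}=(W_i^{(n)})^{+}$, so $W_i^{(n)}$ is produced from $W$ by a length-$m$ string of $\pm$ transforms whose entries are read off from the binary digits $b_0(i),\dots,b_{m-1}(i)$, a digit~$1$ calling a ``$+$'' and a digit~$0$ a ``$-$'' (the order of the digits along the string is immaterial here, since \eqref{eq:condA} is invariant under bit-reversal). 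If $i\succeq j$, then the string of $i$ dominates that of $j$: it carries a ``$+$'' in every position in which the string of $j$ does. Hence it is enough to show that changing one ``$-$'' to a ``$+$'' anywhere in the string never increases $P_e$.

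The heart of the argument is this single-step comparison. For Ar{\i}kan's kernel ($\alpha=1$) it is classical that $V^{-}$ is stochastically degraded with respect to $V^{+}$; for general $\alpha$ this is \emph{false}, and the right replacement is that $V^{-}$ is stochastically degraded with respect to the input-relabelled channel $x\mapsto V^{+}(\alpha^{-1}x)$. I would prove this by writing the degrading channel down explicitly: feeding $a$ into $x\mapsto V^{+}(\alpha^{-1}x)$ reveals a uniform genie symbol $u$ together with two noisy outputs of laws $W(\cdot\mid a+u)$ and $W(\cdot\mid\alpha^{-1}a)$; one discards the second, keeps the first together with $u$, and appends a fresh sample of $W(\cdot\mid\alpha^{-1}u)$---the resulting pair has exactly the law of $V^{-}(a)$ (coupling variable $U=\alpha^{-1}u$). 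Because relabelling the input of a channel by an $\Fq$-linear bijection changes neither $P_e$ nor the capacity nor the Bhattacharyya parameter, this gives $P_e(V^{+})\le P_e(V^{-})$ (and also $P_e(V^{+})\le P_e(V)$, since $V$ is degraded from $V^{+}$ by keeping only its direct copy of the input). The remaining ingredient is routine: stochastic degradation---and degradation up to an input relabelling---is preserved by both $\pm$ transforms. Putting these together, flipping a ``$-$'' to a ``$+$'' at a given position replaces a channel $(V^{-})^{\tau}$ by $(V^{+})^{\tau}$, where $V$ is the channel produced by the transforms before that position and $\tau$ is the common string of transforms applied afterward; the former is degraded (up to relabelling) from the latter, so $P_e$ cannot increase. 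Iterating over the single-bit flips that carry the string of $j$ to that of $i$ yields $P_e(W_i^{(n)})\le P_e(W_j^{(n)})$, and the lemma follows.

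The step I expect to be the real obstacle is precisely this single-step comparison for $\alpha\neq1$: the naive non-binary analogue of ``$V^{-}\preceq V^{+}$'' fails, so one must realize that the correct statement is a degradation \emph{modulo an invertible $\Fq$-linear relabelling of the channel input}, and then check both that this weaker relation still controls $P_e$ (and the other reliability measures used to pick $\cA$) and that it is still inherited by the two polar transforms. Everything else---the recursion, the monotonicity of degradation under the transforms, and the bit-flip induction---is standard and carries over verbatim from the binary case.
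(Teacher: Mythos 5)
The paper itself offers no argument for this lemma---it simply cites \cite{bardet2016_arxiv,bardet2016_ISIT,schurch2016partial} for the binary case and asserts that the extension to $\Fq$ is trivial. Your outline supplies what those citations actually contain: the universal bitwise partial order on the synthetic channels, derived from a one-step degradation $V^-\preceq V^+$, the preservation of degradation by the two transforms, and the fact that $\cA$ is chosen as a set of most reliable indices under a degradation-monotone criterion with consistent tie-breaking. This is the standard route, and your reduction of condition \eqref{eq:condA} to ``flipping a single $-$ to a $+$ in the transform string never increases $P_e$'' is exactly the right key step.

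One point is off, though it does not break the proof. You assert that for $\alpha\neq 1$ the direct degradation $V^-\preceq V^+$ fails and must be replaced by degradation to an input-relabelled version of $V^+$. In the paper's setting, where $W$ is assumed $\Fq$-symmetric, the direct degradation does hold: from an output $(y_0,y_1,u_0)$ of $V^+(u)$ one has $y_1\sim W(\cdot\mid u)$; drawing $u_1$ uniformly and outputting $\bigl(\sigma_{\alpha u_1}(y_1),\,\tilde{y}_1\bigr)$ with a fresh sample $\tilde{y}_1\sim W(\cdot\mid u_1)$ reproduces exactly the law of $V^-(u)$, since by symmetry $\sigma_{\alpha u_1}(y_1)\sim W(\cdot\mid u+\alpha u_1)$. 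In other words $V^-\preceq V\preceq V^+$ just as in the binary case, with no relabelling needed. Your relabelled coupling is also correct---and has the mild virtue of not invoking the symmetry permutations, only a simulator of $W$---so the argument you give still closes, but describing the naive analogue as ``false'' is misleading here; it could only fail if one dropped the $\Fq$-symmetry hypothesis that the paper imposes throughout. A second, smaller caveat you already flag correctly: the lemma as stated is true only for selection rules that are monotone under degradation and break ties consistently with $\succeq$, a convention the paper leaves implicit.
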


This lemma was proved in \cite{bardet2016_arxiv, bardet2016_ISIT, schurch2016partial} for binary polar codes, and the extension to polar codes over finite fields is trivial. In fact, \cite{bardet2016_arxiv, bardet2016_ISIT, schurch2016partial} proved an even stronger result, showing that binary polar codes are decreasing monomial codes. The properties of decreasing monomial codes were used to reduce the complexity of encoding, decoding, and code construction for polar codes \cite{sarkis2016flexible, mondelli2019construction}.

Note that the condition \eqref{eq:condA} is necessary for Theorem~\ref{thm:main} to hold.
Let us consider the simplest case of $n=2,k=1$, and $q=2$.
The choice of $\cA=\{1\}$ satisfies the condition \eqref{eq:condA}.
Under this choice, $u_0$ is the frozen symbol, so its decoding result $\hat{u}_0$ is equal to the true value $\bar{u}_0$.
The decoding results of the codeword symbols are $\hat{x}_0=\hat{u}_0+\alpha\hat{u}_1$ and $\hat{x}_1=\hat{u}_1$.
Therefore, the three events $\{\hat{x}_0\neq \bar{x}_0\}$, $\{\hat{x}_1\neq \bar{x}_1\}$ and $\{\hat{u}_1\neq \bar{u}_1\}$ are the same, so the two codeword symbols have the same SER.
On the other hand, the choice of $\cA=\{0\}$ does not satisfy the condition \eqref{eq:condA}.
Under this choice, $u_1$ is the frozen symbol, so its decoding result $\hat{u}_1$ is equal to the true value $\bar{u}_1$, which implies that $\hat{x}_1=\bar{x}_1$.
However, the event $\{\hat{x}_0\neq \bar{x}_0\}$ happens with positive probability unless the channel $W$ is noiseless.
Therefore, the two codeword symbols do not have the same SER when the condition \eqref{eq:condA} is not satisfied.

\section{Proof of the main result} \label{sect:3}

\subsection{Restricting to the all-zero codeword} \label{sect:allzero}

By \eqref{eq:sth}, we have $W(y|x)=W(a\cdot y + b|a\cdot x + b)$ for all $a\in\Fq^*, b\in\Fq, x\in\Fq, y\in\cY$.
For $a\in \Fq^*$ and two vectors $b_{[0:n-1]}\in\Fq^n, y_{[0:n-1]}\in\cY^n$, we define
$a \cdot y_{[0:n-1]} + b_{[0:n-1]} =(a\cdot y_0 + b_0, a\cdot y_1 + b_1,\dots, a\cdot y_{n-1} + b_{n-1})$. 
Then we have $W^n(y_{[0:n-1]}|x_{[0:n-1]})=W^n(a\cdot y_{[0:n-1]} + b_{[0:n-1]}|a\cdot x_{[0:n-1]} + b_{[0:n-1]})$ for all $a\in \Fq^*$ and $b_{[0:n-1]}, x_{[0:n-1]}\in\Fq^n$.

We first prove that the SER of each codeword symbol is independent of the true value $\bar{u}_{[0:n-1]}$ of the message vector.
Note that the proof technique of Lemma~\ref{lm:2} below is quite standard in polar coding literature.
In fact, this proof technique was already used in Ar{\i}kan's original paper; see Section~VI of \cite{Arikan09}.
The only difference is that the proof in \cite{Arikan09} focused on the word error rate while we focus on SER.

\begin{lemma} \label{lm:2}
	For any $\bar{u}_{[0:n-1]}\in\Fq^n$, any $\bar{u}'_{[0:n-1]}\in\Fq^n$, and any $0\le j\le n-1$, we have
	$$
		\SER_j(\Polar(n,k,\cA,\bar{u}_{\cA^c}),W,\bar{u}_{\cA}) = \SER_j(\Polar(n,k,\cA,\bar{u}'_{\cA^c}),W,\bar{u}'_{\cA}) .
	$$
\end{lemma}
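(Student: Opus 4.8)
The plan is to exploit the $\Fq$-symmetry of the channel together with the linearity of the polar transform to reduce the SER computation with transmitted message $\bar u_{[0:n-1]}$ to the SER computation with transmitted message $\bar u'_{[0:n-1]}$. Write $\bar x = \bar u_{[0:n-1]}\mathbf G_n$ and $\bar x' = \bar u'_{[0:n-1]}\mathbf G_n$, and set $\delta_{[0:n-1]} = \bar u'_{[0:n-1]} - \bar u_{[0:n-1]} \in \Fq^n$, so that $\bar x' - \bar x = \delta_{[0:n-1]}\mathbf G_n =: \eta_{[0:n-1]}$. The first step is to establish the channel-symmetry identity at the block level, namely $W^n(y_{[0:n-1]}\mid x_{[0:n-1]}) = W^n(y_{[0:n-1]} + \eta_{[0:n-1]} \mid x_{[0:n-1]} + \eta_{[0:n-1]})$ for all $x,y$, which follows by applying \eqref{eq:sth} coordinatewise (with $a=1$). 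This shifts the output alphabet in a way that matches the shift in transmitted codeword.

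The second step is to track how this output shift propagates through the SC decoder. The key observation is that the synthetic channel satisfies a corresponding covariance: for the frozen coordinates, the decoded value is forced to the prescribed frozen value, and one checks that $W_i^{(n)}(y_{[0:n-1]}, u_{[0:i-1]} \mid u_i)$ equals $W_i^{(n)}(y_{[0:n-1]} + \eta_{[0:n-1]},\, u_{[0:i-1]} + (\delta_{[0:n-1]}\mathbf G_n)_{\text{prefix}} \mid u_i + \delta_i)$, where the shift in the partial input vector is obtained from the first $i$ coordinates of $\delta_{[0:n-1]}$ after the appropriate upper-triangular action. Concretely, one argues by induction on $i$ that if the SC decoder is run on output $y_{[0:n-1]}$ with frozen values $\bar u_{\cA^c}$ and on output $y_{[0:n-1]} + \eta_{[0:n-1]}$ with frozen values $\bar u'_{\cA^c} = \bar u_{\cA^c} + \delta_{\cA^c}$, then at every step the decoded partial vectors differ exactly by $\delta$; this uses that $\argmax_{u_i} W_i^{(n)}(y, u_{[0:i-1]}\mid u_i)$ transforms under the output/input shift into $\delta_i + \argmax_{u_i} W_i^{(n)}(y', u'_{[0:i-1]}\mid u_i)$, and that ties are preserved (so the random tie-breaking rule induces a measure-preserving bijection between the two decoders' randomness). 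Hence $\hat x_{[0:n-1]}(y_{[0:n-1]},\cA,\bar u_{\cA^c}) = x_{[0:n-1]}$ if and only if $\hat x_{[0:n-1]}(y_{[0:n-1]} + \eta_{[0:n-1]},\cA,\bar u'_{\cA^c}) = x_{[0:n-1]} + \eta_{[0:n-1]}$, with matching probabilities.

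The third step is bookkeeping: substitute $y \mapsto y + \eta_{[0:n-1]}$ and $x \mapsto x + \eta_{[0:n-1]}$ in the defining sum for $\SER_j(\Polar(n,k,\cA,\bar u_{\cA^c}),W,\bar u_{\cA})$. The shift is a bijection of $\cY^n$ (each $\sigma_b$ is a permutation) and of $\Fq^n$, so the sum is unchanged as a set of terms; the factor $W^n(y\mid\bar x)$ becomes $W^n(y\mid \bar x')$ by the step-one identity, the probability term becomes the corresponding probability for the primed decoder by step two, and the indicator $\mathbf 1[x_j \neq \bar x_j]$ becomes $\mathbf 1[x_j + \eta_j \neq \bar x_j + \eta_j] = \mathbf 1[x_j \neq \bar x_j]$, which after the renaming is exactly $\mathbf 1[(\text{new }x)_j \neq \bar x'_j]$. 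This yields the claimed equality. Note that condition \eqref{eq:condA} and the bound $k \le n$ play no role here — the lemma holds for every choice of $\cA$ and frozen vector; only the second condition of $\Fq$-symmetry (the multiplicative permutations $\pi_a$) is not yet needed and will be used later to reduce to the all-zero codeword in the stronger sense.

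The main obstacle is step two: making precise that the SC decoder "commutes" with the additive shift, including the careful handling of the tie-breaking randomness so that the probabilities $\mathbb P(\hat x = x)$ match exactly rather than merely the error events matching in a deterministic decoder. This requires verifying that the shift $y_{[0:i-1]}$-prefix transformation is consistent with the recursive structure of $\mathbf G_n = \begin{bmatrix} 1 & 0 \\ \alpha & 1\end{bmatrix}^{\otimes m}$, i.e., that $(\delta_{[0:n-1]}\mathbf G_n)$ restricted to the first $i$ coordinates is the correct input shift for $W_i^{(n)}$ — this is where the explicit form of the synthetic channel and the upper/lower-triangular block structure of $\mathbf G_n$ must be used, but it is the standard computation going back to Section~VI of \cite{Arikan09}, now carried out symbolwise over $\Fq$ rather than bitwise over $\Ftwo$.
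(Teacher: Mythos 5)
Your plan is essentially identical to the paper's proof: it works out the $a=1$ special case of the paper's argument, which the paper itself notes (in a footnote) suffices for this lemma, with $a\neq 1$ only needed later for Lemma~3. The three steps you describe --- blockwise channel covariance, synthetic-channel covariance propagated through the SC recursion including tie-breaking, then a substitution $y\mapsto y+\eta$, $x\mapsto x+\eta$ in the defining sum --- are exactly the paper's three steps, and your remarks about which hypotheses are actually used are also correct.

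One slip worth fixing: in your synthetic-channel covariance you wrote the past-message shift as $(\delta_{[0:n-1]}\mathbf{G}_n)_{\text{prefix}}$, but the correct shift is just $\delta_{[0:i-1]}$, i.e.
\[
W_i^{(n)}(y_{[0:n-1]}, u_{[0:i-1]}\mid u_i)
= W_i^{(n)}\bigl(y_{[0:n-1]}+\eta_{[0:n-1]},\; u_{[0:i-1]}+\delta_{[0:i-1]} \bigm| u_i+\delta_i\bigr),
\]
with $\eta=\delta\mathbf{G}_n$. Only the channel output is shifted by $\eta=\delta\mathbf{G}_n$; the partial message vector lives in $u$-space and is shifted by $\delta$ itself. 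Your own surrounding prose (``first $i$ coordinates of $\delta$'') and your formula $u_i+\delta_i$ already say this, so the displayed formula contradicts them. The identity then follows directly by reindexing the sum over $u_{[i+1:n-1]}$ (replacing it with $u_{[i+1:n-1]}+\delta_{[i+1:n-1]}$), with no further action of $\mathbf{G}_n$ needed on the message side; also note $\mathbf{G}_n$ is lower-triangular rather than upper-triangular. With that correction, the rest of your plan goes through exactly as in the paper.
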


\begin{proof}
For any $\bar{u}_{[0:n-1]}\in\Fq^n$ and $\bar{u}'_{[0:n-1]}\in\Fq^n$, we can always find a pair $a\in\Fq^*$ and $b_{[0:n-1]}\in\Fq^n$ such that
$$
\bar{u}'_{[0:n-1]} = a \cdot \bar{u}_{[0:n-1]} + b_{[0:n-1]} .
$$
In fact, for each choice of $a\in \Fq^*$, we can always find a vector $b_{[0:n-1]}\in\Fq^n$ satisfying the above equation, so there are $q-1$ choices of the pair $(a, b_{[0:n-1]})$ in total. We may pick an arbitrary one of them in this proof\footnote{For the proof of Lemma~\ref{lm:2}, we can simply choose $a=1$. Other choices of $a$ are needed for the proof of Lemma~\ref{lm:3}.}.

Define $\bar{x}_{[0:n-1]}=\bar{u}_{[0:n-1]} \mathbf{G}_n$,
	$\bar{x}'_{[0:n-1]}=\bar{u}'_{[0:n-1]} \mathbf{G}_n$,
	and $x^b_{[0:n-1]} = b_{[0:n-1]}\cdot \mathbf{G}_n$.
Then we have $\bar{x}'_{[0:n-1]} = a \cdot \bar{x}_{[0:n-1]} + x^b_{[0:n-1]}$, so
	\begin{equation} \label{eq:sp0}
		\begin{aligned}
			 & W^n(y_{[0:n-1]}| \bar{x}_{[0:n-1]} )
			= W^n(a \cdot y_{[0:n-1]} + x^b_{[0:n-1]}| \bar{x}'_{[0:n-1]} ) \quad \text{for all~} y_{[0:n-1]}\in\cY^n ,                          \\
			 & \mathbf{1}[z\neq \bar{x}_j]=\mathbf{1}[a\cdot z + x_j^b \neq \bar{x}'_j]  \quad  \text{for all~} z\in\Fq \text{~and all~} 0\le j\le n-1 .
		\end{aligned}
	\end{equation}
	In this proof, we use the shorthand notation
	\begin{equation} \label{eq:shd1}
		\hat{u}_{[0:n-1]}(y_{[0:n-1]})=\hat{u}_{[0:n-1]}(y_{[0:n-1]},\cA,\bar{u}_{\cA^c}) , \quad \hat{u}'_{[0:n-1]}(y_{[0:n-1]})=\hat{u}_{[0:n-1]}(y_{[0:n-1]},\cA,\bar{u}'_{\cA^c}) .
	\end{equation}
	We first prove that
	\begin{equation} \label{eq:sp1}
		\mathbb{P} \big(\hat{u}_{[0:n-1]}(y_{[0:n-1]})=u_{[0:n-1]} \big)=
		\mathbb{P} \big(\hat{u}'_{[0:n-1]}(a\cdot y_{[0:n-1]} + x^b_{[0:n-1]})=a\cdot u_{[0:n-1]} +b_{[0:n-1]} \big)
	\end{equation}
	for all $u_{[0:n-1]}\in\Fq^n$.
	The randomness here comes from the random decision of the SC decoder when there is a tie in \eqref{eq:SCinfo}.
	Note that
	\begin{align*}
		  & \mathbb{P} \big(\hat{u}_{[0:n-1]}(y_{[0:n-1]})=u_{[0:n-1]} \big) = \prod_{i=0}^{n-1} \mathbb{P} \big( \hat{u}_i(y_{[0:n-1]}, u_{[0:i-1]}) = u_i \big) ,               \\
		  & \mathbb{P} \big(\hat{u}'_{[0:n-1]}(a \cdot y_{[0:n-1]} + x^b_{[0:n-1]})=a \cdot u_{[0:n-1]} +b_{[0:n-1]} \big)                                    \\
		= & \prod_{i=0}^{n-1} \mathbb{P} \big(\hat{u}'_i(  a \cdot y_{[0:n-1]} + x^b_{[0:n-1]}, a \cdot u_{[0:i-1]} + b_{[0:i-1]} )=a \cdot u_i + b_i\big) .
	\end{align*}
	Therefore, in order to prove \eqref{eq:sp1}, we only need to show that
	\begin{equation} \label{eq:sp2}
		\mathbb{P} \big( \hat{u}_i(y_{[0:n-1]}, u_{[0:i-1]}) = u_i \big)
		= \mathbb{P} \big(\hat{u}'_i(  a \cdot y_{[0:n-1]} + x^b_{[0:n-1]}, a \cdot u_{[0:i-1]} + b_{[0:i-1]} )=a \cdot u_i + b_i\big)
	\end{equation}
	for all $0\le i\le n-1$ and all $u_{[0:i]}\in\Fq^{i+1}$.
	If $i\in\cA^c$, then $\hat{u}_i=\bar{u}_i$ and $\hat{u}'_i=\bar{u}'_i$.
	Since $\bar{u}'_i=a \cdot \bar{u}_i +b_i$, the equality \eqref{eq:sp2} clearly holds.
	If $i\in\cA$, then we need to analyze the synthetic channel in \eqref{eq:SCinfo}.
	More specifically, we have
	\begin{align*}
		  & W_i^{(n)}(y_{[0:n-1]}, u_{[0:i-1]} |u_i)
		=\frac{1}{q^{n-1}} \sum_{u_{[i+1:n-1]}\in\Fq^{n-i-1}}
		W^n(y_{[0:n-1]}| u_{[0:n-1]} \mathbf{G}_n)                                                                                     \\
		= & \frac{1}{q^{n-1}} \sum_{u_{[i+1:n-1]}\in\Fq^{n-i-1}}
		W^n(a \cdot y_{[0:n-1]} + x^b_{[0:n-1]}| (a \cdot u_{[0:n-1]} +b_{[0:n-1]}) \mathbf{G}_n)                   \\
		= & W_i^{(n)} (a \cdot y_{[0:n-1]} + x^b_{[0:n-1]}, a \cdot u_{[0:i-1]} +b_{[0:i-1]} | a \cdot u_i +b_i) .
	\end{align*}
	The last equality holds because when $u_{[i+1:n-1]}$ ranges over all values in $\Fq^{n-i-1}$, the sum $a \cdot u_{[i+1:n-1]}+b_{[i+1:n-1]}$ also ranges over all values in $\Fq^{n-i-1}$.
	This equality together with the decoding rule \eqref{eq:SCinfo} immediately implies that \eqref{eq:sp2} holds for $i\in\cA$.
	This completes the proof of \eqref{eq:sp1} and \eqref{eq:sp2}.

	Recall the notation for the decoding result of the codeword vector in \eqref{eq:decw}.
	Similarly to \eqref{eq:shd1}, we use the shorthand notation
	$$
		\hat{x}_{[0:n-1]}(y_{[0:n-1]})=\hat{x}_{[0:n-1]}(y_{[0:n-1]},\cA,\bar{u}_{\cA^c}) , \quad \hat{x}'_{[0:n-1]}(y_{[0:n-1]})=\hat{x}_{[0:n-1]}(y_{[0:n-1]},\cA,\bar{u}'_{\cA^c}) .
	$$
	With this notation, Equation~\eqref{eq:sp1} is equivalent to
	\begin{equation} \label{eq:frx}
		\mathbb{P} \big(\hat{x}_{[0:n-1]}(y_{[0:n-1]})=x_{[0:n-1]} \big)=
		\mathbb{P} \big(\hat{x}'_{[0:n-1]}(a \cdot y_{[0:n-1]} + x^b_{[0:n-1]})= a \cdot x_{[0:n-1]} + x^b_{[0:n-1]} \big)
	\end{equation}
	for all $x_{[0:n-1]}\in\Fq^n$.
	Combining this with \eqref{eq:sp0}, we have
	\begin{align*}
		                 & \SER_j(\Polar(n,k,\cA,\bar{u}_{\cA^c}),W,\bar{u}_{\cA})                                                                                                                                                             \\
		=                & \sum_{y_{[0:n-1]}\in\cY^n} ~\sum_{x_{[0:n-1]}\in\Fq^n} W^n(y_{[0:n-1]}| \bar{x}_{[0:n-1]}) \mathbb{P}(\hat{x}_{[0:n-1]}(y_{[0:n-1]})=x_{[0:n-1]}) \mathbf{1}[x_j\neq \bar{x}_j]                                     \\
		=                & \sum_{y_{[0:n-1]}\in\cY^n} ~\sum_{x_{[0:n-1]}\in\Fq^n} \Big( W^n(a  \cdot y_{[0:n-1]} + x^b_{[0:n-1]}| \bar{x}'_{[0:n-1]} )                                                                                \\
		                 & \hspace*{0.4in} \times \mathbb{P} \big(\hat{x}'_{[0:n-1]}(a  \cdot y_{[0:n-1]} + x^b_{[0:n-1]})=a \cdot x_{[0:n-1]} + x_{[0:n-1]}^b \big) \mathbf{1}[a \cdot x_j + x_j^b \neq \bar{x}'_j] \Big) \\
		\overset{(*)}{=} & \sum_{y_{[0:n-1]}\in\cY^n} ~\sum_{x_{[0:n-1]}\in\Fq^n} W^n(y_{[0:n-1]}| \bar{x}'_{[0:n-1]}) \mathbb{P}(\hat{x}'_{[0:n-1]}(y_{[0:n-1]})=x_{[0:n-1]}) \mathbf{1}[x_j\neq \bar{x}'_j]                                  \\
		=                & \SER_j(\Polar(n,k,\cA,\bar{u}'_{\cA^c}),W,\bar{u}'_{\cA}) .
	\end{align*}
	The equality $(*)$ is obtained from replacing $a \cdot y_{[0:n-1]} + x^b_{[0:n-1]}$ with $y_{[0:n-1]}$ and replacing $a \cdot x_{[0:n-1]} + x_{[0:n-1]}^b$ with $x_{[0:n-1]}$.
	These two replacements are eligible because when $y_{[0:n-1]}$ ranges over $\cY^n$, $a \cdot y_{[0:n-1]}+x^b_{[0:n-1]}$ also ranges over all values in $\cY^n$; similarly, when $x_{[0:n-1]}$ ranges over $\Fq^n$, $a \cdot x_{[0:n-1]} + x^b_{[0:n-1]}$ also ranges over all values in $\Fq^n$.
	Moreover, since $a \cdot x_j + x_j^b$ is the $j$th coordinate of $a \cdot x_{[0:n-1]} \oplus x_{[0:n-1]}^b$, we also replace $a \cdot x_j+x_j^b$ with $x_j$ when we replace $a \cdot x_{[0:n-1]} + x^b_{[0:n-1]}$ with $x_{[0:n-1]}$.
	This completes the proof of the lemma.
\end{proof}

This lemma implies that in the analysis of SER of the SC decoder, we can always assume that the true value $\bar{u}_{[0:n-1]}$ of the message vector (including both information symbols and frozen symbols) is all-zero, or equivalently, the transmitted codeword is all-zero.
More specifically, we have the following expression for the SER:
\begin{equation} \label{eq:uto0}
	\begin{aligned}
		  & \SER_j(\Polar(n,k,\cA,\bar{u}_{\cA^c}),W)                                                                                                                               \\
		= & \SER_j(\Polar(n,k,\cA,0^{n-k}),W)
		= \SER_j(\Polar(n,k,\cA,0^{n-k}),W, 0^k)                                                                                                                                    \\
		= & \sum_{y_{[0:n-1]}\in\cY^n} ~\sum_{x_{[0:n-1]}\in\Fq^n} W^n(y_{[0:n-1]}| 0^n) \mathbb{P}(\hat{x}_{[0:n-1]}(y_{[0:n-1]},\cA,0^{n-k})=x_{[0:n-1]}) \mathbf{1}[x_j\neq 0] ,
	\end{aligned}
\end{equation}
where $0^i$ is the all-zero vector of length $i$.

Note that \eqref{eq:frx} immediately implies the following lemma, which will be used later to prove the main theorem.
\begin{lemma} \label{lm:3}
	Let $x^b_{[0:n-1]}=b_{[0:n-1]} \mathbf{G}_n$ be a codeword of $\Polar(n,k,\cA,0^{n-k})$, i.e., $b_i=0$ for all $i\in\cA^c$.
	Then
	\begin{equation*}
		\begin{aligned}
			  & \mathbb{P} \big(\hat{x}_{[0:n-1]}(y_{[0:n-1]},\cA,0^{n-k})=x_{[0:n-1]} \big)                                                                  \\
			= & \mathbb{P} \big(\hat{x}_{[0:n-1]}(a \cdot y_{[0:n-1]}+x^b_{[0:n-1]},\cA,0^{n-k})= a\cdot x_{[0:n-1]} + x_{[0:n-1]}^b \big)
		\end{aligned}
	\end{equation*}
	for all $a \in \Fq^*, x_{[0:n-1]}\in\Fq^n$ and all $y_{[0:n-1]}\in\cY^n$.
\end{lemma}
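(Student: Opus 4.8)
The plan is to obtain Lemma~\ref{lm:3} as a direct specialization of equation~\eqref{eq:frx}, which was derived inside the proof of Lemma~\ref{lm:2} for an arbitrary admissible pair $(a,b_{[0:n-1]})$ with $a\in\Fq^*$. First I would run that derivation with $\bar{u}_{[0:n-1]}=0^n$, so that the ``primed'' message vector becomes $\bar{u}'_{[0:n-1]}=a\cdot 0^n+b_{[0:n-1]}=b_{[0:n-1]}$ and the two codewords involved are $\bar{x}_{[0:n-1]}=0^n$ and $\bar{x}'_{[0:n-1]}=x^b_{[0:n-1]}=b_{[0:n-1]}\mathbf{G}_n$. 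The single point that needs checking is that the two SC decoders appearing on the two sides of \eqref{eq:frx} use the \emph{same} frozen vector: by construction the first uses $\bar{u}_{\cA^c}=0^{n-k}$, while the second uses $\bar{u}'_{\cA^c}=(b_{[0:n-1]})_{\cA^c}$, and the latter equals $0^{n-k}$ exactly because the hypothesis of the lemma is $b_i=0$ for all $i\in\cA^c$, i.e., $x^b_{[0:n-1]}\in\Polar(n,k,\cA,0^{n-k})$. With this identification, $\hat{x}'_{[0:n-1]}(\cdot)=\hat{x}_{[0:n-1]}(\cdot,\cA,0^{n-k})=\hat{x}_{[0:n-1]}(\cdot)$ in the shorthand of the proof of Lemma~\ref{lm:2}, and \eqref{eq:frx} reads verbatim as the asserted identity.

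Since, with $\bar{u}_{[0:n-1]}=0^n$ fixed, the pair $(a,b_{[0:n-1]})$ is admissible for every $a\in\Fq^*$ (the equation $\bar{u}'_{[0:n-1]}=a\cdot 0^n+b_{[0:n-1]}$ holds regardless of $a$), the derivation of \eqref{eq:frx} applies for each $a\in\Fq^*$ with the same fixed $b_{[0:n-1]}$, so the identity holds for all $a\in\Fq^*$; it holds for all $y_{[0:n-1]}\in\cY^n$ and all $x_{[0:n-1]}\in\Fq^n$ for the same reason it does in \eqref{eq:frx}. I do not expect any genuine obstacle here: the substantive work — the symmetry relation $W^n(y_{[0:n-1]}|x_{[0:n-1]})=W^n(a\cdot y_{[0:n-1]}+x^b_{[0:n-1]}|a\cdot x_{[0:n-1]}+x^b_{[0:n-1]})$, the invariance of the synthetic channels $W_i^{(n)}$ under the affine relabelling $u_i\mapsto a\cdot u_i+b_i$ at each decoding stage, and the invariance of the tie-breaking rule in \eqref{eq:SCinfo} under that relabelling — was already carried out for Lemma~\ref{lm:2}. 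The only new element is the elementary bookkeeping observation that transporting the all-zero frozen vector by the affine map $u\mapsto a\cdot u+b$ returns the all-zero frozen vector precisely when $b$ vanishes on $\cA^c$, which is exactly the hypothesis imposed on $b_{[0:n-1]}$.
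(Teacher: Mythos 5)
Your proposal is correct and follows the paper's own (unstated but intended) route: the paper simply remarks that \eqref{eq:frx} immediately implies Lemma~\ref{lm:3}, and you have unpacked exactly why — setting $\bar{u}_{[0:n-1]}=0^n$ makes $\bar{u}'_{[0:n-1]}=b_{[0:n-1]}$, and the hypothesis $b_{\cA^c}=0^{n-k}$ forces both decoders in \eqref{eq:frx} to use the frozen vector $0^{n-k}$, while the freedom to take any $a\in\Fq^*$ (with $b$ fixed) is exactly what the paper's footnote about ``other choices of $a$'' anticipates.
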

As a final remark, we note that we do not need condition \eqref{eq:condA} for the set $\cA$ in Lemma~\ref{lm:2} and Lemma~\ref{lm:3}, but we will need this condition later in the proof of the main theorem.

\subsection{Recursive implementation of the SC decoder}

In practice, the SC decoder is usually implemented recursively based on the transition probability.
Below we recap this recursive structure since it is needed in the next step of our proof.
Recall that $W: \Fq\to \cY$ is a symmetric channel.
We denote the set of transition probabilities for an output symbol $y\in\cY$ as
$$
	T(y) = \{W(y|u): u\in\Fq\}.
$$
For a channel output vector $y_{[0:n-1]}\in\cY^n$, we define a vector
$$
	T_{[0:n-1]} = T_{[0:n-1]}(y_{[0:n-1]}) = (T_0, T_1, \dots, T_{n-1}), \text{~where~} T_i = T(y_i) \text{~for~} 0\le i \le n-1.
$$
Note that each coordinate $T_i = T(y_i)$ is the set of transition probabilities for the output symbol $y_i$.
We write $T_{[0:n-1]}(y_{[0:n-1]})$ when we want to emphasize its dependence on $y_{[0:n-1]}$.
In most scenarios this dependence is clear from the context, and we will simply write $T_{[0:n-1]}$.

It is well known that the decoding result of the SC decoder only depends on the transition probabilities of the channel outputs.
In other words, if two channel output vectors have the same vector of transition probabilitiy sets, then their decoding results have the same probability distribution\footnote{Recall that the SC decoder produces a random decoding result when there is a tie in \eqref{eq:SCinfo}.} under the SC decoder.
Below we will write $\hat{u}_i(y_{[0:n-1]}, u_{[0:i-1]})$ and $\hat{u}_i(T_{[0:n-1]}, u_{[0:i-1]})$ interchangeably.
We also write $\hat{u}_{[0:n-1]}(y_{[0:n-1]},\cA,\bar{u}_{\cA^c})$ and $\hat{u}_{[0:n-1]}(T_{[0:n-1]},\cA,\bar{u}_{\cA^c})$ interchangeably.

For $a \in\Fq^*, b_{[0:n-1]}\in\Fq^n$, and $T_{[0:n-1]}=T_{[0:n-1]}(y_{[0:n-1]})$,
we define
\begin{align*}
	a \cdot T_{[0:n-1]} + b_{[0:n-1]}
	=(a \cdot T_0+b_0, a \cdot T_1+b_1,\dots, a \cdot T_{n-1}+b_{n-1}), \\
	\text{where~} a \cdot T_i+b_i=T(a \cdot y_i+b_i)
	\text{~for~} 0\le i \le n-1.
\end{align*}
Each coordinate $a \cdot T_i+b_i=T(a \cdot y_i+b_i)$ is the set of transition probabilities for the output symbol $a \cdot y_i+b_i$.
With the new notation, we restate Lemma~\ref{lm:3} as follows.
\begin{lemma} \label{lm:4}
	Let $x^b_{[0:n-1]}=b_{[0:n-1]} \mathbf{G}_n$ be a codeword of $\Polar(n,k,\cA,0^{n-k})$, i.e., $b_i=0$ for all $i\in\cA^c$.
	Then
	\begin{equation*}
		\begin{aligned}
			  & \mathbb{P} \big(\hat{x}_{[0:n-1]}(T_{[0:n-1]},\cA,0^{n-k})=x_{[0:n-1]} \big)                                                                  \\
			= & \mathbb{P} \big(\hat{x}_{[0:n-1]}(a \cdot T_{[0:n-1]}+x^b_{[0:n-1]},\cA,0^{n-k})=a \cdot x_{[0:n-1]} + x_{[0:n-1]}^b \big)
		\end{aligned}
	\end{equation*}
	for all $a \in \Fq^*, x_{[0:n-1]}\in\Fq^n$ and all $y_{[0:n-1]}\in\cY^n$.
\end{lemma}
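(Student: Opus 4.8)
The plan is to deduce Lemma~\ref{lm:4} directly from Lemma~\ref{lm:3} by translating between channel output vectors and their vectors of transition-probability sets, with essentially no new work beyond bookkeeping. The one external fact I would invoke is the one recalled just above the statement of Lemma~\ref{lm:4}: the probability distribution of the SC decoder's output depends on a channel output vector $y_{[0:n-1]}$ only through $T_{[0:n-1]}(y_{[0:n-1]})$, so that $\hat{x}_{[0:n-1]}(y_{[0:n-1]},\cA,\bar{u}_{\cA^c})$ and $\hat{x}_{[0:n-1]}(T_{[0:n-1]},\cA,\bar{u}_{\cA^c})$ have the same distribution whenever $T_{[0:n-1]}=T_{[0:n-1]}(y_{[0:n-1]})$. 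Recall also that in the statement of Lemma~\ref{lm:4} the vector $T_{[0:n-1]}$ is implicitly of the form $T_{[0:n-1]}(y_{[0:n-1]})$ for the quantified $y_{[0:n-1]}\in\cY^n$.

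First I would record the key compatibility identity between the two group actions: for any $a\in\Fq^*$, any $b_{[0:n-1]}\in\Fq^n$, and any $y_{[0:n-1]}\in\cY^n$,
\[
	a\cdot T_{[0:n-1]}(y_{[0:n-1]}) + b_{[0:n-1]} = T_{[0:n-1]}\big(a\cdot y_{[0:n-1]} + b_{[0:n-1]}\big).
\]
This is immediate coordinate by coordinate: by definition the $i$th coordinate of the left-hand side is $a\cdot T_i + b_i = T(a\cdot y_i + b_i)$, which is exactly the $i$th coordinate of $T_{[0:n-1]}(a\cdot y_{[0:n-1]} + b_{[0:n-1]})$.

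With this identity in hand, I would start from the conclusion of Lemma~\ref{lm:3}, which holds for every $y_{[0:n-1]}\in\cY^n$, and rewrite both occurrences of the decoder in terms of $T$-vectors: on the left replace $y_{[0:n-1]}$ by $T_{[0:n-1]}(y_{[0:n-1]})=T_{[0:n-1]}$, and on the right replace $a\cdot y_{[0:n-1]} + x^b_{[0:n-1]}$ by $T_{[0:n-1]}(a\cdot y_{[0:n-1]} + x^b_{[0:n-1]})$, which by the identity above equals $a\cdot T_{[0:n-1]} + x^b_{[0:n-1]}$. Since both replacements preserve the probability distribution of the decoder output, the desired equality follows verbatim, for all $a\in\Fq^*$, $x_{[0:n-1]}\in\Fq^n$, and $y_{[0:n-1]}\in\cY^n$.

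There is essentially no obstacle here; Lemma~\ref{lm:4} is a notational restatement of Lemma~\ref{lm:3}, and the only point to verify carefully is the coordinatewise compatibility of the action on transition-probability sets with the action on channel outputs, which is routine from the definition $a\cdot T_i + b_i = T(a\cdot y_i + b_i)$. The reason for phrasing the result in terms of $T_{[0:n-1]}$ is that the next stage of the argument manipulates the recursive structure of the SC decoder, where the natural state is the vector of transition-probability sets rather than the raw channel outputs.
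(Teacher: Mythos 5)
Your proposal is correct and matches the paper's (implicit) reasoning: the paper simply states Lemma~4 as a restatement of Lemma~3 in the $T$-vector notation, offering no further proof, and your argument fills in exactly the routine bookkeeping that justifies this—namely that the decoder's output distribution depends on $y_{[0:n-1]}$ only through $T_{[0:n-1]}(y_{[0:n-1]})$, together with the coordinatewise compatibility $a\cdot T_{[0:n-1]}(y_{[0:n-1]})+b_{[0:n-1]}=T_{[0:n-1]}(a\cdot y_{[0:n-1]}+b_{[0:n-1]})$, which holds by definition of the action on $T$-sets.
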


We need some more notation to describe the recursive structure of the SC decoder.
For $y_0,y_1\in\cY$ and $u_0\in\Fq$, we define two sets $T^-(y_0, y_1) = \{W^-(y_0,y_1|u): u\in\Fq\}$ and $T^+(y_0, y_1, u_0) = \{W^+(y_0,y_1, u_0|u): u\in\Fq\}$  of size $q$, where
\begin{equation} \label{eq:Lpm}
	\begin{aligned}
		 & W^-(y_0,y_1     | u)  = \frac 1 q \sum_{u_1\in\Fq} W(y_0|u + \alpha u_1)W(y_1| u_1), \\
		 & W^+(y_0,y_1,u_0 | u)  = \frac 1 q W(y_0|u_0 + \alpha u)W(y_1| u) .
	\end{aligned}
\end{equation}
For $y_{[0:n-1]}\in\cY^n$ and $u_{[0:n/2-1]}\in\Fq^{n/2}$, we define
\begin{equation} \label{eq:Lvecpm}
	\begin{aligned}
		 & T_{[0:n/2-1]}^-(y_{[0:n-1]})=(T_0^-, T_1^-,\dots,T_{n/2-1}^-) , \quad \text{where~} T_i^-= T^-(y_i,y_{i+n/2}) \text{~for~} 0\le i\le n/2-1 , \\
		 & T_{[0:n/2-1]}^+(y_{[0:n-1]},u_{[0:n/2-1]})=(T_0^+, T_1^+,\dots,T_{n/2-1}^+) ,                                                                \\
		 & \hspace*{2.5in} \text{where~} T_i^+= T^+(y_i,y_{i+n/2},u_i) \text{~for~} 0\le i\le n/2-1 .
	\end{aligned}
\end{equation}
For a set $\cA\subseteq\{0,1,\dots,n-1\}$, we define
$$
	\cA^-=\cA\cap \{0,1,\dots,n/2-1\} \text{~~and~~}
	\cA^+=\{i:i+n/2\in\cA\cap\{n/2,n/2+1,\dots,n-1\}\} .
$$

The encoding procedure $x_{[0:n-1]}=u_{[0:n-1]} \mathbf{G}_n$ for polar codes can be decomposed in the following way: Let $z_{[0:n/2-1]}=u_{[0:n/2-1]} \mathbf{G}_{n/2}$ and $z_{[n/2:n-1]}=u_{[n/2:n-1]} \mathbf{G}_{n/2}$, i.e., we first encode the two halves separately.
Then $x_{[0:n/2-1]}=z_{[0:n/2-1]} + \alpha \cdot  z_{[n/2:n-1]}$ and $x_{[n/2:n-1]}=z_{[n/2:n-1]}$.
This recursive structure can be summarized as
\begin{equation} \label{eq:sfu}
	u_{[0:n-1]} \mathbf{G}_{n}=((u_{[0:n/2-1]} + \alpha \cdot  u_{[n/2:n-1]})\mathbf{G}_{n/2}, u_{[n/2:n-1]} \mathbf{G}_{n/2} ) .
\end{equation}

From this point on, we assume that all the frozen symbols take value $0$, which is justified by Lemma~\ref{lm:2}.
We use the shorthand notation
\begin{align*}
	\hat{u}_{[0:n-1]}(T_{[0:n-1]},\cA) = \hat{u}_{[0:n-1]}(T_{[0:n-1]},\cA,0^{n-k}) , \quad
	\hat{x}_{[0:n-1]}(T_{[0:n-1]},\cA) = \hat{x}_{[0:n-1]}(T_{[0:n-1]},\cA,0^{n-k}) .
\end{align*}
Given a channel output vector $y_{[0:n-1]}$, the SC decoder first decodes $u_{[0:n/2-1]}$ as
$$
	\hat{u}_{[0:n/2-1]}(T_{[0:n/2-1]}^-(y_{[0:n-1]}),\cA^-) .
$$
Then it calculates $\hat{z}_{[0:n/2-1]}=\hat{u}_{[0:n/2-1]}(T_{[0:n/2-1]}^-(y_{[0:n-1]}),\cA^-)\mathbf{G}_{n/2}$.
In the next step, the SC decoder decodes $u_{[n/2:n-1]}$ as
$$
	\hat{u}_{[0:n/2-1]}(T_{[0:n/2-1]}^+(y_{[0:n-1]},\hat{z}_{[0:n/2-1]}),\cA^+) .
$$
In summary, we have
\begin{equation} \label{eq:reSC}
	\begin{aligned}
		  & \bP\big(\hat{u}_{[0:n-1]}(y_{[0:n-1]},\cA) = u_{[0:n-1]} \big)                                                               \\
		= & \bP\big(\hat{u}_{[0:n/2-1]}(T_{[0:n/2-1]}^-(y_{[0:n-1]}),\cA^-) = u_{[0:n/2-1]} \big)                                        \\
		  & \times \bP\big( \hat{u}_{[0:n/2-1]}(T_{[0:n/2-1]}^+(y_{[0:n-1]},u_{[0:n/2-1]} \mathbf{G}_{n/2}),\cA^+) = u_{[n/2:n-1]} \big)
	\end{aligned}
\end{equation}
for all $y_{[0:n-1]}\in\cY^n$ and all $u_{[0:n-1]}\in\Fq^n$.
In light of \eqref{eq:sfu}, Equation~\eqref{eq:reSC} is equivalent to
\begin{equation} \label{eq:recx}
	\begin{aligned}
		  & \bP\big(\hat{x}_{[0:n-1]}(y_{[0:n-1]},\cA) = (z_{[0:n/2-1]} + \alpha \cdot z_{[n/2:n-1]}, z_{[n/2:n-1]}) \big) \\
		= & \bP\big(\hat{x}_{[0:n/2-1]}(T_{[0:n/2-1]}^-(y_{[0:n-1]}),\cA^-) = z_{[0:n/2-1]} \big)                          \\
		  & \times \bP\big( \hat{x}_{[0:n/2-1]}(T_{[0:n/2-1]}^+(y_{[0:n-1]}, z_{[0:n/2-1]} ),\cA^+) = z_{[n/2:n-1]} \big)
	\end{aligned}
\end{equation}
for all $y_{[0:n-1]}\in\cY^n$ and all $z_{[0:n-1]}\in\Fq^n$.

\subsection{Proof of Theorem~\ref{thm:main}}
For $n=2^m$, we define $m$ permutations $\delta_0^{(m)},\delta_1^{(m)},\dots,\delta_{m-1}^{(m)}$ on the set $\{0,1,\dots,n-1\}$.
For $0\le i\le n-1$, let $(b_{m-1}(i),b_{m-2}(i),\dots,b_1(i),b_0(i))$ be its binary expansion defined in \eqref{eq:bexp}.
For $0\le r\le m-1$, $\delta_r^{(m)}(i)$ is obtained from flipping the $r$th digit in the binary expansion of $i$.
In other words, the binary expansion of $\delta_r^{(m)}(i)$ is
$$
	(b_{m-1}(i),b_{m-2}(i),\dots, b_{r+1}(i), b_r(i)\oplus 1, b_{r-1}(i),\dots, b_1(i),b_0(i)) .
$$
As a concrete example, if we apply the permutation $\delta_1^{(2)}$ to each coordinate of $(0,1,2,3)$, then we obtain $(2,3,0,1)$; if we apply the permutation $\delta_0^{(2)}$ to each coordinate of $(0,1,2,3)$, then we obtain $(1,0,3,2)$.
We further define $m$ mappings $\xi_0^{(m)},\xi_1^{(m)},\dots,\xi_{m-1}^{(m)}$ on vectors of length $n$.
For $0\le r\le m-1$ and a length-$n$ vector $x_{[0:n-1]}$, we define
$$
	\xi_r^{(m)}(x_{[0:n-1]})=(a_0\cdot x_{\delta_r^{(m)}(0)}, a_1\cdot x_{\delta_r^{(m)}(1)},\dots,a_{n-1}\cdot x_{\delta_r^{(m)}(n-1)}),
$$
where $a_i = -\alpha$ if $b_r(i) = 0$ and $a_i = -\alpha^{-1}$ if $b_r(i) = 1$.
In particular, we have
$$
	\xi_{m-1}^{(m)}(x_{[0:n-1]})=(-\alpha\cdot x_{[n/2:n-1]}, -\alpha^{-1} \cdot x_{[0:n/2-1]}) .
$$

\begin{lemma} \label{lm:5}
	Let $n=2^m$.
	Suppose that all the frozen symbols take value $0$.
	Suppose that the index set $\cA$ of information symbols satisfies the condition \eqref{eq:condA}.
	Then
	\begin{equation}  \label{eq:r2}
		\bP\big( \hat{x}_{[0:n-1]}(y_{[0:n-1]},\cA) = x_{[0:n-1]}\big)
		= \bP\big(\hat{x}_{[0:n-1]}(\xi_{m-1}^{(m)}(y_{[0:n-1]}),\cA) = \xi_{m-1}^{(m)}(x_{[0:n-1]}) \big)
	\end{equation}
	for all $y_{[0:n-1]}\in\cY^n$ and all $x_{[0:n-1]}\in\Fq^n$.
\end{lemma}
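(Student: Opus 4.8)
The plan is to peel off the top level of the polar recursion and verify \eqref{eq:r2} factor by factor. Write the target codeword as $x_{[0:n-1]}=(z_1+\alpha\cdot z_2,\,z_2)$ with $z_2=x_{[n/2:n-1]}$ and $z_1=x_{[0:n/2-1]}-\alpha\cdot z_2$, so that $z_1=z_{[0:n/2-1]}$ and $z_2=z_{[n/2:n-1]}$ are the two half-length codewords appearing in \eqref{eq:recx}. From $\xi_{m-1}^{(m)}(x_{[0:n-1]})=(-\alpha\cdot x_{[n/2:n-1]},\,-\alpha^{-1}\cdot x_{[0:n/2-1]})$ a one-line computation gives $\xi_{m-1}^{(m)}(x_{[0:n-1]})=(z_1+\alpha\cdot\tilde z_2,\,\tilde z_2)$ with $\tilde z_2=-\alpha^{-1}z_1-z_2$; the same identity with the channel output $y_{[0:n-1]}$ in place of $x_{[0:n-1]}$ records how $\xi_{m-1}^{(m)}$ rearranges and rescales the received symbols. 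Applying \eqref{eq:recx} to both sides of \eqref{eq:r2}, it therefore suffices to prove that (i) the minus-decoder run on $\xi_{m-1}^{(m)}(y_{[0:n-1]})$ outputs $z_1$ with the same probability as the minus-decoder run on $y_{[0:n-1]}$, and (ii) the plus-decoder run on $\xi_{m-1}^{(m)}(y_{[0:n-1]})$ with side information $z_1$ outputs $\tilde z_2$ with the same probability as the plus-decoder run on $y_{[0:n-1]}$ with side information $z_1$ outputs $z_2$.

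For (i), substitute $\xi_{m-1}^{(m)}(y_{[0:n-1]})_i=-\alpha\cdot y_{i+n/2}$ and $\xi_{m-1}^{(m)}(y_{[0:n-1]})_{i+n/2}=-\alpha^{-1}\cdot y_i$ into the definition of $W^-$ in \eqref{eq:Lpm} and use the channel symmetry $W(a\cdot y\mid x)=W(y\mid a^{-1}x)$ coming from \eqref{eq:sth}. After the change of summation variable that absorbs the scalars $-\alpha,-\alpha^{-1}$, one obtains $W^-(-\alpha\cdot y_{i+n/2},\,-\alpha^{-1}\cdot y_i\mid u)=W^-(y_i,\,y_{i+n/2}\mid u)$ for every $u\in\Fq$ and every $i$. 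Hence the minus synthetic channels fed to the inner decoder have identical transition probabilities at every coordinate for $\xi_{m-1}^{(m)}(y_{[0:n-1]})$ and for $y_{[0:n-1]}$, so the inner SC decoders agree in distribution; since $\tilde z_1=z_1$, this gives (i).

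The crux is (ii). The analogous substitution into the definition of $W^+$ in \eqref{eq:Lpm}, again using $W(a\cdot y\mid x)=W(y\mid a^{-1}x)$, yields for every $u\in\Fq$ and every $i$
\[
W^+\big(-\alpha\cdot y_{i+n/2},\,-\alpha^{-1}\cdot y_i,\,(z_1)_i \mid u\big)=W^+\big(y_i,\,y_{i+n/2},\,(z_1)_i \mid -\alpha^{-1}(z_1)_i-u\big).
\]
Thus the $i$-th plus synthetic channel of $\xi_{m-1}^{(m)}(y_{[0:n-1]})$ is obtained from that of $y_{[0:n-1]}$ by the affine relabeling $u\mapsto-\alpha^{-1}(z_1)_i-u$ of the input symbol; in vector form this relabeling is ``multiply the input vector by $a=-1$ and add the vector $x^b=-\alpha^{-1}z_1$''. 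Now condition \eqref{eq:condA} enters: it forces $\cA^-\subseteq\cA^+$ (if $j\in\cA^-$ then $j+n/2\succeq j$, so $j+n/2\in\cA$, i.e., $j\in\cA^+$), and therefore $-\alpha^{-1}z_1$ — which is $-\alpha^{-1}$ times a codeword of the length-$(n/2)$ polar code with information set $\cA^-$ — is itself a codeword of the length-$(n/2)$ polar code with information set $\cA^+$. Hence $x^b=-\alpha^{-1}z_1$ is an admissible choice of $x^b$ in Lemma~\ref{lm:4}, applied to the length-$(n/2)$ polar code over the plus synthetic channel. Taking $a=-1$ and $x^b=-\alpha^{-1}z_1$ in Lemma~\ref{lm:4} turns ``the plus-decoder of $y_{[0:n-1]}$ outputs $z_2$'' into ``the plus-decoder of $\xi_{m-1}^{(m)}(y_{[0:n-1]})$ outputs $(-1)z_2+(-\alpha^{-1}z_1)=\tilde z_2$'', which is precisely (ii). (When $z_1$ fails to be a codeword of the $\cA^-$ code, both minus-factors vanish and (ii) is vacuous.)

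Combining (i) and (ii) through \eqref{eq:recx} gives \eqref{eq:r2}; since $\xi_{m-1}^{(m)}$ then maps codewords of $\cC$ to codewords of $\cC$, both sides of \eqref{eq:r2} vanish unless $x_{[0:n-1]}$ is a codeword, so no separate bookkeeping for non-codewords is needed. The main obstacle is step (ii): one must track how $\xi_{m-1}^{(m)}$ transforms the plus synthetic channel \emph{together with its side information} $z_1$, recognize that the induced shift of the decoding target is translation by $-\alpha^{-1}z_1$, and see that condition \eqref{eq:condA} is exactly what guarantees $-\alpha^{-1}z_1$ lies in the plus-code — which is why \eqref{eq:condA} appears in the hypothesis of the lemma. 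A minor technical point is that Lemma~\ref{lm:4} is invoked for the plus synthetic channel rather than for $W$ itself; this is legitimate because the proof of Lemma~\ref{lm:4} relies only on the $\Fq$-symmetry of the underlying channel, which the plus synthetic channel inherits.
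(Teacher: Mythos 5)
Your proof is correct and takes essentially the same approach as the paper's: peel off one level of the polar recursion via \eqref{eq:recx}, verify that $\xi_{m-1}^{(m)}$ leaves the $W^-$ transition-probability sets unchanged and acts on the $W^+$ transition-probability sets by the transformation $-1\cdot T - \alpha^{-1} z_1$, use condition \eqref{eq:condA} to get $\cA^- \subseteq \cA^+$ and hence that $-\alpha^{-1}z_1$ is a codeword of the $\cA^+$-code, and then invoke Lemma~\ref{lm:4} (for the length-$n/2$ code over the $\Fq$-symmetric synthetic channel $W^+$) with $a = -1$ and $x^b = -\alpha^{-1}z_1$. The only cosmetic difference is that you phrase the $W^+$ identity as an affine relabeling of the input symbol, whereas the paper converts it into the equivalent transformation of the output alphabet before applying Lemma~\ref{lm:4}; these two views coincide for symmetric channels, and the rest of the argument — the case split on whether $z_1$ lies in the $\cA^-$-code and the observation that $\xi_{m-1}^{(m)}$ preserves $\cC$ — matches the paper's.
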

\begin{proof}
	We will prove another equation that is equivalent to \eqref{eq:r2}.
	Specifically, we will prove that
	\begin{equation} \label{eq:ktg}
		\begin{aligned}
			  & \bP\big( \hat{x}_{[0:n-1]}(y_{[0:n-1]},\cA) = (z_{[0:n/2-1]} +\alpha \cdot  z_{[n/2:n-1]}, z_{[n/2:n-1]})\big)                                                                              \\
			= & \bP\Big( \hat{x}_{[0:n-1]}( \xi_{m-1}^{(m)}(y_{[0:n-1]}),\cA) = \big(-\alpha \cdot z_{[n/2:n-1]}, -\alpha^{-1}\cdot (z_{[0:n/2-1]} +\alpha \cdot  z_{[n/2:n-1]})\big) \Big)
		\end{aligned}
	\end{equation}
	for all $y_{[0:n-1]}\in\cY^n$ and all $z_{[0:n-1]}\in\Fq^n$.
These two equations are equivalent because	\eqref{eq:ktg} is obtained from replacing $x_{[0:n-1]}$ with $(z_{[0:n/2-1]} +\alpha \cdot z_{[n/2:n-1]}, z_{[n/2:n-1]})$ in \eqref{eq:r2}.

By \eqref{eq:Lpm}, we have
\begin{align*}
& W^-(-\alpha \cdot y_1, -\alpha^{-1}\cdot y_0 | u_0) = \frac 1 q \sum_{u_1\in\Fq} W(-\alpha \cdot y_1 |u_0 + \alpha u_1)W(-\alpha^{-1}\cdot y_0| u_1) \\
= & \frac 1 q \sum_{u_1\in\Fq} W( y_1 |-\alpha^{-1} u_0 - u_1) W(y_0|-\alpha u_1)
= \frac 1 q \sum_{u_1\in\Fq} W(y_0|-\alpha u_1) W( y_1 |-\alpha^{-1} u_0 - u_1) \\
\overset{(a)}{=} & \frac 1 q \sum_{v\in\Fq} W(y_0|u_0+ \alpha v) W( y_1 | v)
= W^-(y_0,y_1 | u_0) \quad \quad
\text{for all~} u_0\in\Fq \text{~and~} y_0,y_1\in\cY ,
\end{align*}
where equality $(a)$ is obtained by setting $v=-\alpha^{-1} u_0 - u_1$ and observing that $-\alpha u_1=u_0+ \alpha v$.
Therefore, $T^-(y_0,y_1) = T^-(-\alpha \cdot y_1, -\alpha^{-1}\cdot y_0)$.
Definition \eqref{eq:Lvecpm} further implies that $T_{[0:n/2-1]}^-(y_{[0:n-1]})=T_{[0:n/2-1]}^-( \xi_{m-1}^{(m)}(y_{[0:n-1]}))$ for all $y_{[0:n-1]}\in\cY^n$.
	Therefore,
	\begin{equation} \label{eq:00}
		\begin{aligned}
			  & \bP\big(\hat{x}_{[0:n/2-1]}(T_{[0:n/2-1]}^-(y_{[0:n-1]}),\cA^-) = z_{[0:n/2-1]} \big)                                     \\
			= & \bP\big(\hat{x}_{[0:n/2-1]}(T_{[0:n/2-1]}^-( \xi_{m-1}^{(m)}(y_{[0:n-1]})),\cA^-) = z_{[0:n/2-1]} \big)
		\end{aligned}
	\end{equation}
	for all $y_{[0:n-1]}\in\cY^n$ and all $z_{[0:n/2-1]}\in\Fq^{n/2}$.

By \eqref{eq:Lpm},
	\begin{equation*}
	\begin{aligned}
& W^+(-\alpha \cdot y_1,-\alpha^{-1}\cdot y_0,u_0 | u_1) = \frac 1 q  W(-\alpha \cdot y_1 |u_0 + \alpha u_1)W(-\alpha^{-1}\cdot y_0| u_1) \\
= & \frac 1 q  W( y_1 |-\alpha^{-1} u_0 - u_1) W(y_0|-\alpha u_1)
= \frac 1 q  W(y_0|-\alpha u_1) W( y_1 |-\alpha^{-1} u_0 - u_1) \\
= & W^+(y_0, y_1, u_0 | -\alpha^{-1} u_0-u_1)   
= W^+(-1\cdot (y_0, y_1, u_0) | \alpha^{-1} u_0 + u_1)  \\
= & W^+(-1\cdot (y_0, y_1, u_0)-\alpha^{-1} u_0| u_1)
		\end{aligned}
	\end{equation*}
	for all $y_0,y_1\in\cY$ and $u_0, u_1\in \Fq$.
In the Appendix, we prove that $W^+$ is also an $\Fq$-symmetric memoryless channel.
Therefore, for $a\in \Fq^*$ and $b\in\Fq$, we can use \eqref{eq:sth} to define the operation $a\cdot (y_0, y_1, u_0)+b$ on the output symbol $(y_0, y_1, u_0)$ of $W^+$. In particular, $-1\cdot (y_0, y_1, u_0)-\alpha^{-1} u_0$ in the above equations is defined in this way. Therefore, $T^+(-\alpha \cdot y_1,-\alpha^{-1}\cdot y_0,u_0) = -1\cdot T^+(y_0, y_1, u_0)-\alpha^{-1} u_0$.
Definition \eqref{eq:Lvecpm} further implies that
	\begin{equation} \label{eq:mpq}
		T_{[0:n/2-1]}^+( \xi_{m-1}^{(m)}(y_{[0:n-1]}),z_{[0:n/2-1]}) = -1\cdot T_{[0:n/2-1]}^+(y_{[0:n-1]},z_{[0:n/2-1]}) -\alpha^{-1}\cdot z_{[0:n/2-1]}
	\end{equation}
	for all $y_{[0:n-1]}\in\cY^n$ and all $z_{[0:n/2-1]}\in\Fq^{n/2}$.

	Let $|\cA|=k, |\cA^-|=k^-$, and  $|\cA^+|=k^+$.
	The condition \eqref{eq:condA} implies that $\cA^-\subseteq\cA^+$.
	As a consequence, $\Polar(n,k^-,\cA^-,0^{n/2-k^-}) \subseteq \Polar(n,k^+,\cA^+,0^{n/2-k^+})$.
	In other words, if $z_{[0:n/2-1]}$ is a codeword in $\Polar(n,k^-,\cA^-,0^{n/2-k^-})$, then it must also be a codeword in $\Polar(n,k^+,\cA^+,0^{n/2-k^+})$.

	We divide the proof into two cases.

		{\bf Case 1)} If $z_{[0:n/2-1]}$ is not a codeword in $\Polar(n,k^-,\cA^-,0^{n/2-k^-})$, then the probability on both sides of \eqref{eq:00} is $0$.
	This is simply because the SC decoder can only output a valid codeword as the decoding result.
	In this case, \eqref{eq:recx} implies that the probability on both sides of \eqref{eq:ktg} is $0$.

		{\bf Case 2)} If $z_{[0:n/2-1]}$ is a codeword in $\Polar(n,k^-,\cA^-,0^{n/2-k^-})$, then it is also a codeword in $\Polar(n,k^+,\cA^+,0^{n/2-k^+})$.
	In this case, we have
	\begin{align*}
		  & \bP\big( \hat{x}_{[0:n/2-1]}(T_{[0:n/2-1]}^+( \xi_{m-1}^{(m)}(y_{[0:n-1]}),z_{[0:n/2-1]} ),\cA^+) = -\alpha^{-1}\cdot(z_{[0:n/2-1]}+\alpha \cdot z_{[n/2:n-1]}) \big) \\
		= & \bP\big( \hat{x}_{[0:n/2-1]}(-1\cdot T_{[0:n/2-1]}^+(y_{[0:n-1]},z_{[0:n/2-1]}) -\alpha^{-1} z_{[0:n/2-1]},\cA^+) = - z_{[n/2:n-1]} -\alpha^{-1} z_{[0:n/2-1]}  \big)    \\
		= & \bP\big( \hat{x}_{[0:n/2-1]}(T_{[0:n/2-1]}^+(y_{[0:n-1]},z_{[0:n/2-1]} ),\cA^+) = z_{[n/2:n-1]} \big)
	\end{align*}
	for all $z_{[n/2:n-1]}\in\Fq^{n/2}$,
	where the first equality follows from \eqref{eq:mpq}, and the second equality follows from Lemma~\ref{lm:4}.
	Combining this with \eqref{eq:recx} and \eqref{eq:00}, we complete the proof of \eqref{eq:ktg}.
	Since \eqref{eq:ktg} is equivalent to \eqref{eq:r2}, this completes the proof of the lemma.
\end{proof}

\begin{lemma} \label{lm:6}
	Let $n=2^m$.
	Suppose that all the frozen symbols take value $0$.
	Suppose that the index set $\cA$ of information symbols satisfies the condition \eqref{eq:condA}.
	Then
	\begin{equation} \label{eq:r3}
		\bP\big( \hat{x}_{[0:n-1]}(y_{[0:n-1]},\cA) = x_{[0:n-1]}\big) = \bP\big( \hat{x}_{[0:n-1]}( \xi_{m-r}^{(m)}(y_{[0:n-1]}),\cA) =  \xi_{m-r}^{(m)}(x_{[0:n-1]}) \big)
	\end{equation}
	for all $y_{[0:n-1]}\in\cY^n$, all $x_{[0:n-1]}\in\Fq^n$, and all $1\le r\le m$.
\end{lemma}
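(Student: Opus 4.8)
The plan is to prove Lemma~\ref{lm:6} by induction on $m$, with Lemma~\ref{lm:5} supplying both the base case and the top-bit instance at every level, and the recursive identity \eqref{eq:recx} used to descend one level. For $m=1$ the only case is $r=1$, which is exactly Lemma~\ref{lm:5}. For the inductive step, fix $m\ge 2$ and assume \eqref{eq:r3} holds with $m$ replaced by $m-1$ (for every $1\le r'\le m-1$); the case $r=1$ of \eqref{eq:r3} is again Lemma~\ref{lm:5}, so it remains to treat $2\le r\le m$, and we set $\ell=m-r\in\{0,1,\dots,m-2\}$.

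The first step is to observe that, since $\ell\le m-2$, flipping bit $\ell$ never disturbs the top bit $b_{m-1}$. Hence $\delta_\ell^{(m)}$ maps each half $\{0,\dots,n/2-1\}$ and $\{n/2,\dots,n-1\}$ to itself, acting on each as $\delta_\ell^{(m-1)}$, and the scalar attached to index $i$ in $\xi_\ell^{(m)}$ depends only on $b_\ell(i)$, which agrees across the two halves. Therefore $\xi_\ell^{(m)}$ acts block-diagonally,
$$
\xi_\ell^{(m)}\bigl(v_{[0:n-1]}\bigr)=\Bigl(\xi_\ell^{(m-1)}\bigl(v_{[0:n/2-1]}\bigr),\ \xi_\ell^{(m-1)}\bigl(v_{[n/2:n-1]}\bigr)\Bigr)
$$
for any length-$n$ vector $v_{[0:n-1]}$ (entries in $\Fq$, in $\cY$, or in the transition-probability alphabet seen by the SC decoder), and moreover $\xi_\ell^{(m-1)}$ is $\Fq$-linear and commutes with multiplication by $\alpha$.

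The second step is to establish three covariance identities. First, writing $x_{[0:n-1]}=(z_{[0:n/2-1]}+\alpha\cdot z_{[n/2:n-1]},\,z_{[n/2:n-1]})$, $z'_{[0:n/2-1]}=\xi_\ell^{(m-1)}(z_{[0:n/2-1]})$, and $z'_{[n/2:n-1]}=\xi_\ell^{(m-1)}(z_{[n/2:n-1]})$, $\Fq$-linearity gives $\xi_\ell^{(m)}(x_{[0:n-1]})=(z'_{[0:n/2-1]}+\alpha\cdot z'_{[n/2:n-1]},\,z'_{[n/2:n-1]})$, so $\xi_\ell^{(m)}$ is compatible with the encoding split \eqref{eq:sfu}. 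Second, $T_{[0:n/2-1]}^-(\xi_\ell^{(m)}(y_{[0:n-1]}))=\xi_\ell^{(m-1)}(T_{[0:n/2-1]}^-(y_{[0:n-1]}))$. Third, $T_{[0:n/2-1]}^+(\xi_\ell^{(m)}(y_{[0:n-1]}),\,\xi_\ell^{(m-1)}(z_{[0:n/2-1]}))=\xi_\ell^{(m-1)}(T_{[0:n/2-1]}^+(y_{[0:n-1]},z_{[0:n/2-1]}))$. The last two reduce, through the block-diagonal action, to the single-coordinate facts $T^-(a\cdot y_0,a\cdot y_1)=a\cdot T^-(y_0,y_1)$ and $T^+(a\cdot y_0,a\cdot y_1,a u_0)=a\cdot T^+(y_0,y_1,u_0)$ for $a\in\Fq^*$; a one-line substitution in \eqref{eq:Lpm} shows that the multiplicative symmetry of $W^-$ is realized by $(y_0,y_1)\mapsto(a\cdot y_0,a\cdot y_1)$ and that of $W^+$ by $(y_0,y_1,u_0)\mapsto(a\cdot y_0,a\cdot y_1,a u_0)$ — the $\Fq$-symmetry of $W^+$ being exactly what is shown in the Appendix, and that of $W^-$ being entirely analogous.

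With these in hand the argument closes quickly. Apply \eqref{eq:recx} to $\xi_\ell^{(m)}(y_{[0:n-1]})$ and the codeword $\xi_\ell^{(m)}(x_{[0:n-1]})$; by the first covariance identity its left-hand side is exactly the right-hand side of \eqref{eq:r3}, and by the other two the two factors on the right become $\bP\bigl(\hat{x}_{[0:n/2-1]}(\xi_\ell^{(m-1)}(T_{[0:n/2-1]}^-(y_{[0:n-1]})),\cA^-)=\xi_\ell^{(m-1)}(z_{[0:n/2-1]})\bigr)$ and the analogous expression built from $T_{[0:n/2-1]}^+(y_{[0:n-1]},z_{[0:n/2-1]})$ and $\cA^+$. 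Invoking the induction hypothesis on each factor — which is legitimate because $\cA^-$ and $\cA^+$ inherit condition \eqref{eq:condA} on $\{0,\dots,n/2-1\}$ from $\cA$ (a short check on binary expansions) and because $W^-$ and $W^+$ are $\Fq$-symmetric — strips $\xi_\ell^{(m-1)}$ from both factors, and comparing with \eqref{eq:recx} applied directly to $y_{[0:n-1]}$ and $x_{[0:n-1]}$ yields \eqref{eq:r3}. I expect the main obstacle to be bookkeeping rather than real difficulty: getting the block-diagonal decomposition and the covariance identities precisely right, and confirming that $\cA^-$ and $\cA^+$ inherit \eqref{eq:condA}. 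The only genuinely new computation beyond the proof of Lemma~\ref{lm:5} is the multiplicative covariance of $T^-$ and $T^+$, which is immediate once the multiplicative symmetries of $W^-$ and $W^+$ are identified.
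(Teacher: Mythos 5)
Your proof is correct and is essentially the paper's proof: the paper phrases the induction as being on $r$ (with the hypothesis ranging over all $m$) while you phrase it as being on $m$ (with $r=1$ supplied by Lemma~\ref{lm:5} at every level), but both reduce the pair $(m,r)$ to $(m-1,r-1)$, use the same block-diagonal action of $\xi_{m-r}^{(m)}$ and the same two covariance identities for $T^-$ and $T^+$, and close via \eqref{eq:recx}. Your extra remarks — reducing the covariance identities to single-coordinate scaling facts and explicitly noting that $\cA^\pm$ inherit \eqref{eq:condA} and that $W^\pm$ are $\Fq$-symmetric so the hypothesis applies to them — merely make explicit what the paper leaves implicit or defers to the Appendix.
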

\begin{proof}
	We prove by induction on $r$.
	The base case $r=1$ is already proved in Lemma~\ref{lm:5}.
	Now we assume that \eqref{eq:r3} holds for $r-1$ and all values of $m$, and we prove it for $r$.
	Observe that
	$$
		T_{[0:n/2-1]}^-( \xi_{m-r}^{(m)}(y_{[0:n-1]}))= \xi_{m-r}^{(m-1)}(T_{[0:n/2-1]}^-(y_{[0:n-1]}))  .
	$$
	Since $\cA$ satisfies the condition \eqref{eq:condA}, both $\cA^+$ and $\cA^-$ also satisfy the condition \eqref{eq:condA}.
	By the induction hypothesis, \eqref{eq:r3} holds for $r-1$ and $m-1$, so
	\begin{equation} \label{eq:1pt}
		\begin{aligned}
			  & \bP\big(\hat{x}_{[0:n/2-1]}(T_{[0:n/2-1]}^-(y_{[0:n-1]}),\cA^-) = z_{[0:n/2-1]}  \big)                                                                               \\
			= & \bP\big( \hat{x}_{[0:n/2-1]}( \xi_{m-r}^{(m-1)}(T_{[0:n/2-1]}^-(y_{[0:n-1]})),\cA^-) =  \xi_{m-r}^{(m-1)}(z_{[0:n/2-1]}) \big) \\
			= & \bP\big( \hat{x}_{[0:n/2-1]}(T_{[0:n/2-1]}^-( \xi_{m-r}^{(m)}(y_{[0:n-1]})),\cA^-) =  \xi_{m-r}^{(m-1)}(z_{[0:n/2-1]}) \big)
		\end{aligned}
	\end{equation}
	for all $z_{[0:n/2-1]}\in\Fq^{n/2}$.
	It is easy to verify that
	$$
		T_{[0:n/2-1]}^+( \xi_{m-r}^{(m)}(y_{[0:n-1]}),  \xi_{m-r}^{(m-1)}(z_{[0:n/2-1]}) )
		=  \xi_{m-r}^{(m-1)} ( T_{[0:n/2-1]}^+(y_{[0:n-1]}, z_{[0:n/2-1]} ) ) .
	$$
	Again by the induction hypothesis,
	\begin{align*}
		  & \bP\big( \hat{x}_{[0:n/2-1]}(T_{[0:n/2-1]}^+(y_{[0:n-1]}, z_{[0:n/2-1]} ),\cA^+) = z_{[n/2:n-1]} \big)                                                                                                                        \\
		= & \bP\big( \hat{x}_{[0:n/2-1]}(  \xi_{m-r}^{(m-1)} (T_{[0:n/2-1]}^+(y_{[0:n-1]}, z_{[0:n/2-1]} )),\cA^+) =  \xi_{m-r}^{(m-1)} (z_{[n/2:n-1]}) \big)                                     \\
		= & \bP\big( \hat{x}_{[0:n/2-1]}( T_{[0:n/2-1]}^+( \xi_{m-r}^{(m)}(y_{[0:n-1]}),  \xi_{m-r}^{(m-1)}(z_{[0:n/2-1]}) ),\cA^+) = \xi_{m-r}^{(m-1)} (z_{[n/2:n-1]}) \big)
	\end{align*}
	for all $z_{[n/2:n-1]}\in\Fq^{n/2}$.
	Combining this with \eqref{eq:recx} and \eqref{eq:1pt}, we obtain that
	\begin{align*}
		  & \bP\big(\hat{x}_{[0:n-1]}(y_{[0:n-1]},\cA) = (z_{[0:n/2-1]} +\alpha \cdot z_{[n/2:n-1]}, z_{[n/2:n-1]}) \big)                                                                                                                    \\
		= & \bP\big(\hat{x}_{[0:n-1]}( \xi_{m-r}^{(m)}(y_{[0:n-1]}),\cA) = ( \xi_{m-r}^{(m-1)}(z_{[0:n/2-1]} +\alpha \cdot z_{[n/2:n-1]}),  \xi_{m-r}^{(m-1)}(z_{[n/2:n-1]}) ) \big)
	\end{align*}
	for all $z_{[0:n-1]}\in\Fq^n$.
	Since
	\begin{align*}
		  &  (\xi_{m-r}^{(m-1)}(z_{[0:n/2-1]} +\alpha \cdot z_{[n/2:n-1]}) ,  \xi_{m-r}^{(m-1)}(z_{[n/2:n-1]}) ) \\
		= &  (\xi_{m-r}^{(m)} ( (z_{[0:n/2-1]} +\alpha\cdot  z_{[n/2:n-1]}, z_{[n/2:n-1]}) ),
	\end{align*}
	we further obtain that
	\begin{align*}
		  & \bP\big(\hat{x}_{[0:n-1]}(y_{[0:n-1]},\cA) = (z_{[0:n/2-1]} +\alpha \cdot  z_{[n/2:n-1]}, z_{[n/2:n-1]}) \big)                                                                    \\
		= & \bP\big(\hat{x}_{[0:n-1]}(  \xi_{m-r}^{(m)}(y_{[0:n-1]}),\cA) =   \xi_{m-r}^{(m)} ( (z_{[0:n/2-1]} +\alpha \cdot  z_{[n/2:n-1]}, z_{[n/2:n-1]}) ) \big)
	\end{align*}
	for all $z_{[0:n-1]}\in\Fq^n$.
	Finally, \eqref{eq:r3} follows from replacing $(z_{[0:n/2-1]} + \alpha \cdot z_{[n/2:n-1]}, z_{[n/2:n-1]})$ with $x_{[0:n-1]}$.
	This completes the proof of the lemma.
\end{proof}

\begin{figure*}
	\centering
	\begin{subfigure}{0.4\linewidth}
		\centering
		\includegraphics[width=\linewidth]{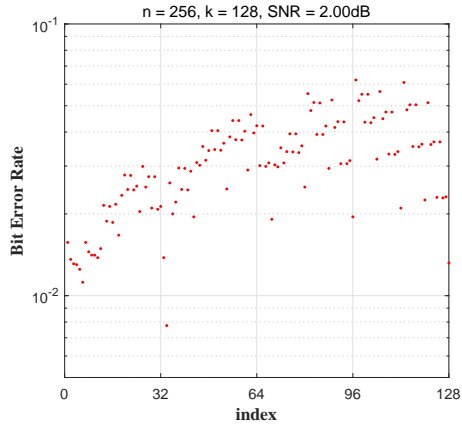}
		\caption{BER of $128$ message bits.
			Maximum value is $6.22\times 10^{-2}$.
			Minimum value is $0.78\times 10^{-2}$.}
	\end{subfigure}
	~\hfill
	\begin{subfigure}{0.4\linewidth}
		\centering
		\includegraphics[width=\linewidth]{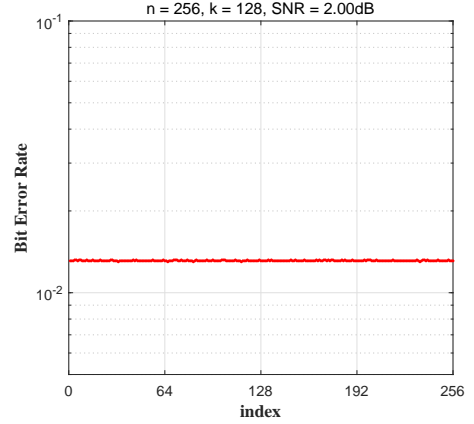}
		\caption{BER of $256$ codeword bits.
			Maximum value is $1.32\times 10^{-2}$.
			Minimum value is $1.30\times 10^{-2}$.}
	\end{subfigure}
	\caption{Simulation results of the $(n=256,k=128)$ binary polar code over binary-input AWGN channel with $E_b/N_0=2$dB}
	\label{fig:cp1}
\end{figure*}

\begin{lemma} \label{lm:7}
	Let $n=2^m$.
	Let $W$ be a $\ff_q$-symmetric memoryless channel.
	Suppose that the index set $\cA$ of information symbols satisfies the condition \eqref{eq:condA}.
	Then
	$$
		\SER_j(\Polar(n,k,\cA,\bar{u}_{\cA^c}),W) =
		\SER_{\delta_r^{(m)}(j)}(\Polar(n,k,\cA,\bar{u}_{\cA^c}),W)
	$$
	for all $0\le j\le n-1$, all $0\le r\le m-1$, and all $\bar{u}_{\cA^c}\in\Fq^{n-k}$.
\end{lemma}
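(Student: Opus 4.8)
The plan is to derive Lemma~\ref{lm:7} from the all-zero reduction \eqref{eq:uto0} together with Lemma~\ref{lm:6}, by a single change of variables. First I would use \eqref{eq:uto0} to write, for each coordinate index $t$,
\begin{equation*}
\SER_t(\Polar(n,k,\cA,\bar{u}_{\cA^c}),W)=\sum_{y_{[0:n-1]}\in\cY^n}\sum_{x_{[0:n-1]}\in\Fq^n}W^n(y_{[0:n-1]}|0^n)\,\bP\big(\hat{x}_{[0:n-1]}(y_{[0:n-1]},\cA)=x_{[0:n-1]}\big)\,\mathbf{1}[x_t\neq 0],
\end{equation*}
so it is enough to show the right-hand side is unchanged when $t=j$ is replaced by $t=\delta_r^{(m)}(j)$. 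Note that Lemma~\ref{lm:6}, re-indexed by $r\mapsto m-r$, states that
\begin{equation*}
\bP\big(\hat{x}_{[0:n-1]}(y_{[0:n-1]},\cA)=x_{[0:n-1]}\big)=\bP\big(\hat{x}_{[0:n-1]}(\xi_r^{(m)}(y_{[0:n-1]}),\cA)=\xi_r^{(m)}(x_{[0:n-1]})\big)
\end{equation*}
for every $0\le r\le m-1$, and its hypotheses hold here because $\cA$ satisfies \eqref{eq:condA} and all frozen symbols are $0$. The natural move is therefore to substitute $\xi_r^{(m)}(y_{[0:n-1]})$ and $\xi_r^{(m)}(x_{[0:n-1]})$ for $y_{[0:n-1]}$ and $x_{[0:n-1]}$ in the sum for $\SER_j$.

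To license that substitution I would record three small facts. First, $\delta_r^{(m)}$ is an involution (flipping the $r$th binary digit twice is the identity), and $\xi_r^{(m)}$ is a bijection on both $\cY^n$ and $\Fq^n$, since it is the composition of the coordinate permutation $\delta_r^{(m)}$ with coordinate-wise multiplication by the nonzero scalars $a_i\in\{-\alpha,-\alpha^{-1}\}\subseteq\Fq^*$, where on $\cY$ the map $y\mapsto a_i\cdot y=\pi_{a_i}(y)$ is a permutation. Second, since $\xi_r^{(m)}$ scales the entries of $0^n\in\Fq^n$ by nonzero field elements, $\xi_r^{(m)}(0^n)=0^n$. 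Third, applying the $\Fq$-symmetry relation $W[y|x]=W[\pi_a(y)|ax]$ with $x=0$ gives $W(y_i|0)=W(a_i\cdot y_{\delta_r^{(m)}(i)}|0)$ for each $i$, so taking the product over $i$ and reindexing by the permutation $\delta_r^{(m)}$ yields $W^n(y_{[0:n-1]}|0^n)=W^n(\xi_r^{(m)}(y_{[0:n-1]})|0^n)$ for all $y_{[0:n-1]}\in\cY^n$.

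With these in hand, the computation is as follows. Using the third fact I replace $W^n(y_{[0:n-1]}|0^n)$ by $W^n(\xi_r^{(m)}(y_{[0:n-1]})|0^n)$ in the sum for $\SER_j$, and using Lemma~\ref{lm:6} I replace $\bP(\hat{x}_{[0:n-1]}(y_{[0:n-1]},\cA)=x_{[0:n-1]})$ by $\bP(\hat{x}_{[0:n-1]}(\xi_r^{(m)}(y_{[0:n-1]}),\cA)=\xi_r^{(m)}(x_{[0:n-1]}))$. Now every factor in the summand other than the indicator is evaluated at $\xi_r^{(m)}(y_{[0:n-1]})$ and $\xi_r^{(m)}(x_{[0:n-1]})$, so the substitution $y_{[0:n-1]}\leftarrow\xi_r^{(m)}(y_{[0:n-1]})$, $x_{[0:n-1]}\leftarrow\xi_r^{(m)}(x_{[0:n-1]})$ — valid by the first fact — restores precisely the summand of the $\SER$ formula, except that $\mathbf{1}[x_j\neq 0]$ has turned into $\mathbf{1}[x_{\delta_r^{(m)}(j)}\neq 0]$. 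Indeed, inverting $\xi_r^{(m)}$ and using that $\delta_r^{(m)}$ is an involution shows that the $j$th entry of the preimage of $x_{[0:n-1]}$ equals $a_{\delta_r^{(m)}(j)}^{-1}$ times its $\delta_r^{(m)}(j)$th entry, and a nonzero scalar multiple of a field element is $0$ exactly when that element is. Hence $\SER_j(\Polar(n,k,\cA,\bar{u}_{\cA^c}),W)=\SER_{\delta_r^{(m)}(j)}(\Polar(n,k,\cA,\bar{u}_{\cA^c}),W)$, which is the claim.

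I expect the only delicate point to be this last transformation of the indicator, which works precisely because $\delta_r^{(m)}$ is an involution and the scalars $a_i$ are nonzero; the rest is routine bookkeeping. As a closing remark, Theorem~\ref{thm:main} then follows immediately, since the digit-flips $\delta_0^{(m)},\dots,\delta_{m-1}^{(m)}$ generate a transitive action on $\{0,1,\dots,n-1\}$ and hence force all of $\SER_0(\cC,W),\dots,\SER_{n-1}(\cC,W)$ to coincide.
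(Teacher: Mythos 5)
Your proposal is correct and follows essentially the same route as the paper's proof: reduce to the all-zero codeword via \eqref{eq:uto0}, use the $\Fq$-symmetry to show $W^n(y_{[0:n-1]}|0^n)=W^n(\xi_r^{(m)}(y_{[0:n-1]})|0^n)$, invoke Lemma~\ref{lm:6} (re-indexed), and finish with the change of variables $y_{[0:n-1]}\leftarrow\xi_r^{(m)}(y_{[0:n-1]})$, $x_{[0:n-1]}\leftarrow\xi_r^{(m)}(x_{[0:n-1]})$, which converts $\mathbf{1}[x_j\neq 0]$ into $\mathbf{1}[x_{\delta_r^{(m)}(j)}\neq 0]$. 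The only difference is that you spell out the bookkeeping around the involution $\delta_r^{(m)}$ and the nonzero scalars $a_i$ in more detail than the paper does, which is fine.
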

\begin{proof}
	By \eqref{eq:uto0}, we have
	\begin{align*}
		  & \SER_j(\Polar(n,k,\cA,\bar{u}_{\cA^c}), W)                                                                                                                                      \\
		= & \sum_{y_{[0:n-1]}\in\cY^n} ~\sum_{x_{[0:n-1]}\in\Fq^n} W^n(y_{[0:n-1]}| 0^n) \mathbb{P}(\hat{x}_{[0:n-1]}(y_{[0:n-1]},\cA)=x_{[0:n-1]}) \mathbf{1}[x_j\neq 0]                   \\
		= & \sum_{y_{[0:n-1]}\in\cY^n} ~\sum_{x_{[0:n-1]}\in\Fq^n} \Big( W^n( \xi_r^{(m)}(y_{[0:n-1]}) | 0^n)                                                                 \\
		  & \hspace*{1.7in} \times \bP\big( \hat{x}_{[0:n-1]}( \xi_r^{(m)}(y_{[0:n-1]}),\cA) =  \xi_r^{(m)}(x_{[0:n-1]}) \big) \mathbf{1}[x_j\neq 0]  \Big)       \\
		= & \sum_{y_{[0:n-1]}\in\cY^n} ~\sum_{x_{[0:n-1]}\in\Fq^n} W^n(y_{[0:n-1]}| 0^n) \mathbb{P}(\hat{x}_{[0:n-1]}(y_{[0:n-1]},\cA)=x_{[0:n-1]}) \mathbf{1}[x_{\delta_r^{(m)}(j)}\neq 0] \\
		= & \SER_{\delta_r^{(m)}(j)}(\Polar(n,k,\cA,\bar{u}_{\cA^c}),W) ,
	\end{align*}
	where the second equality follows from $W^n(y_{[0:n-1]}| 0^n)= W^n(  \xi_r^{(m)}(y_{[0:n-1]}) | 0^n)$ and Lemma~\ref{lm:6}; the third equality is obtained from replacing $ \xi_r^{(m)}(y_{[0:n-1]})$ with $y_{[0:n-1]}$ and replacing $ \xi_r^{(m)}(x_{[0:n-1]})$ with $x_{[0:n-1]}$.
\end{proof}

Lemma~\ref{lm:7} immediately implies Theorem~\ref{thm:main} because for any $j\in\{0,1,\dots,n-1\}$, we can always apply a subset of $\delta_0^{(m)},\delta_1^{(m)},\dots,\delta_{m-1}^{(m)}$ to $j$ and obtain $0$.
This means that $\SER_j(\Polar(n,k,\cA,\bar{u}_{\cA^c}),W) =
	\SER_0(\Polar(n,k,\cA,\bar{u}_{\cA^c}),W)$ for all $j\in\{0,1,\dots,n-1\}$ and completes the proof of Theorem~\ref{thm:main}.


\section{The connection to \cite{Arikan11}} \label{sect:4}

In this section, we restrict ourselves to binary polar codes.
As mentioned in Introduction, we observed the simulation results in Fig.~\ref{fig:cp1} as we repeated the experiments in \cite{Arikan11}.
From Fig.~\ref{fig:cp1} we can see that the variance of the BERs of message bits is very large.
In contrast, the BERs of codeword bits are extremely stable, to the extent that they form a straight line.
At this point, it makes sense to discuss the connection between our results and \cite{Arikan11}.

We still use $u_{[0:n-1]}$ to denote the message vector and use $x_{[0:n-1]}$ to denote the codeword vector.
The set $\cA$ is still the index set of information bits.
The main observation in \cite{Arikan11} is that the average BER of $x_{\cA}$ is much smaller than the average BER of $u_{\cA}$ under the SC decoder.
This is somewhat counter-intuitive because the SC decoder directly decodes $u_{\cA}$, and $x_{\cA}$ is obtained from multiplying the decoding result of $u_{[0:n-1]}$ with the encoding matrix $\mathbf{G}_n$.
Even till today, no rigorous analysis is available to explain this phenomenon.

The results in this paper imply that the BERs of all the codeword bits are equal to each other, and they all equal to the BER of the last message bit because the last codeword bit is the same as the last message bit.
On the bright side, the last message bit is the best-protected message bit if we assume that all the previous message bits are decoded correctly.
However, we also have an argument in the opposite direction: The decoding error in the SC decoder accumulates, and the last message bit takes the most damage from decoding errors in previous message bits.
In fact, if we look at Fig.~\ref{fig:cp1} closely, we can see that there is an increasing trend in the BERs of message bits (increasing with the indices).
The BER of the last message bit is an outlier because it drops abruptly compared to the previous bits.

To conclude, although our result is somewhat correlated with the observation in \cite{Arikan11}, we are still not able to explain the phenomenon in \cite{Arikan11}.
That calls for future research effort.

\section*{Acknowledgement}
We thank Ling Liu and Henry Pfister for providing valuable feedback on an earlier version of this paper.

\bibliographystyle{IEEEtran}
\bibliography{BER}

\appendix

\begin{lemma}
If $W$ is an $\Fq$-symmetric memoryless channel, then both $W^+$ and $W^-$ defined in \eqref{eq:Lpm} are $\Fq$-symmetric memoryless channels.
\end{lemma}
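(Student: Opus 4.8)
The plan is to write down explicitly, for each of $W^-$ and $W^+$, the two families of output permutations required by the definition of an $\Fq$-symmetric channel, and then to check the two defining identities by substituting directly into \eqref{eq:Lpm}. Both $W^-$ and $W^+$ are memoryless by construction, so only the two permutation conditions need verification. Throughout I will use the symmetry properties of $W$ in the forms $W(y+b\mid x+b)=W(y\mid x)$ (condition~(1) with $x'=x+b$) and $W(a\cdot y\mid ax)=W(y\mid x)$ (condition~(2) unpacked), valid for all $x,b\in\Fq$, $a\in\Fq^*$, $y\in\cY$.

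For $W^-\colon\Fq\to\cY\times\cY$, I take $\sigma_b^-(y_0,y_1)=(y_0+b,\,y_1)$ and $\pi_a^-(y_0,y_1)=(a\cdot y_0,\,a\cdot y_1)$; these are permutations of $\cY\times\cY$ because the $\sigma_b$ and $\pi_a$ of $W$ are permutations of $\cY$. For the additive identity, starting from $W^-(y_0+b,y_1\mid u+b)=\frac1q\sum_{u_1\in\Fq}W(y_0+b\mid(u+\alpha u_1)+b)\,W(y_1\mid u_1)$ and applying $W(y_0+b\mid x+b)=W(y_0\mid x)$ to the first factor gives $W^-(y_0+b,y_1\mid u+b)=W^-(y_0,y_1\mid u)$, which is exactly $W^-(y_0,y_1\mid u)=W^-(\sigma_b^-(y_0,y_1)\mid u+b)$. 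For the multiplicative identity, substitute the bijective change of variable $u_1\mapsto a u_1$ in the sum, factor $a$ out of each argument, and apply $W(a\cdot y\mid ax)=W(y\mid x)$ to both factors; this yields $W^-(y_0,y_1\mid u)=W^-(a\cdot y_0,\,a\cdot y_1\mid a u)$.

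For $W^+\colon\Fq\to\cY\times\cY\times\Fq$, I take $\sigma_b^+(y_0,y_1,u_0)=(y_0,\,y_1+b,\,u_0-\alpha b)$ and $\pi_a^+(y_0,y_1,u_0)=(a\cdot y_0,\,a\cdot y_1,\,a u_0)$, again permutations of the threefold output alphabet. Here there is no sum, so the checks are shorter: $W^+(\sigma_b^+(y_0,y_1,u_0)\mid u+b)=\frac1q W\big(y_0\mid(u_0-\alpha b)+\alpha(u+b)\big)\,W(y_1+b\mid u+b)$, where the first argument collapses to $u_0+\alpha u$ by cancellation of the $\alpha b$ terms and the second factor equals $W(y_1\mid u)$ by $W(y_1+b\mid u+b)=W(y_1\mid u)$, so the expression equals $W^+(y_0,y_1,u_0\mid u)$; and $W^+(a\cdot y_0,\,a\cdot y_1,\,a u_0\mid a u)=\frac1q W\big(a\cdot y_0\mid a(u_0+\alpha u)\big)\,W(a\cdot y_1\mid a u)=\frac1q W(y_0\mid u_0+\alpha u)\,W(y_1\mid u)$ by $W(a\cdot y\mid ax)=W(y\mid x)$. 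These are precisely the permutation families tacitly used in the proof of Lemma~\ref{lm:5}: there the notation $-1\cdot(y_0,y_1,u_0)-\alpha^{-1}u_0$ denotes applying $\pi_{-1}^+$ followed by $\sigma_{-\alpha^{-1}u_0}^+$.

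I do not expect any genuine obstacle—the whole argument is bookkeeping. The one point that needs a little care is the $\alpha$-dependence of the third coordinate of $\sigma_b^+$, which must be chosen so that the shift it introduces into the first input of $W$ exactly cancels the replacement of $u$ by $u+b$; a wrong constant there breaks the additive identity. Beyond that, each of the four identities is a one- or two-line substitution using the two rewritten symmetry properties of $W$.
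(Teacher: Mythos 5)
Your proof is correct and follows the same strategy as the paper: exhibit the two required permutation families explicitly and verify the two symmetry identities by direct substitution into \eqref{eq:Lpm}. The one difference is cosmetic: for the additive symmetry of $W^+$ you pick $\sigma_b^+(y_0,y_1,u_0)=(y_0,\,y_1+b,\,u_0-\alpha b)$ (shifting the $y_1$ and $u_0$ coordinates), whereas the paper picks $\sigma_b(y_0,y_1,u_0)=(y_0+\alpha b,\,y_1+b,\,u_0)$ (shifting $y_0$ and $y_1$); both are valid permutations and both make the identity $W^+(\sigma_b(\cdot)\mid u+b)=W^+(\cdot\mid u)$ hold, since only the existence of some such family is required by the definition of $\Fq$-symmetry.
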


\begin{proof}
We first prove the claim for $W^+$. By definition, we need to show that (1) there exist $q$ permutations $\{\sigma_b: b\in\Fq\}$ on the output alphabet $\cY^2\times \Fq$ such that $W^+(y_0,y_1,u_0|u) = W^+(\sigma_{u'-u}((y_0,y_1,u_0))|u')$ for all $y_0,y_1\in\cY$ and all $u,u',u_0\in\Fq$;
(2) there exist $q-1$ permutations $\{\pi_a: a\in\Fq^*\}$ on $\cY^2\times \Fq$ such that $W^+(y_0,y_1,u_0|u) = W^+(\pi_a((y_0,y_1,u_0))|a u)$ for all $y_0,y_1\in \cY$, $u,u_0\in\Fq$, and $a\in\Fq^*$.

We may choose $\sigma_b((y_0,y_1,u_0))=(y_0+\alpha b, y_1+b, u_0)$ and $\pi_a((y_0,y_1,u_0))=(a\cdot y_0, a\cdot y_1, a u_0)$. The following calculations allow us to verify this choice.
\begin{align*}
& W^+(\sigma_b((y_0,y_1,u_0))|u+b)
=W^+(y_0+\alpha b, y_1+b, u_0|u+b) \\
= & \frac 1 q W(y_0+\alpha b|u_0 + \alpha (u+b))W(y_1+b| u+b)
= \frac 1 q W(y_0|u_0 + \alpha u)W(y_1| u) \\
= & W^+(y_0,y_1,u_0 | u) , \\
& W^+(\pi_a((y_0,y_1,u_0))|au) = W^+(a\cdot y_0, a\cdot y_1, a u_0| au) \\
= & \frac 1 q W(a\cdot y_0|au_0 + \alpha au)W(a\cdot y_1| au)
= \frac 1 q W(y_0|u_0 + \alpha u)W(y_1| u) \\
= & W^+(y_0,y_1,u_0 | u) .
\end{align*}
As for $W^-$, we need to show that (1) there exist $q$ permutations $\{\sigma_b: b\in\Fq\}$ on the output alphabet $\cY^2$ such that $W^-(y_0,y_1|u) = W^-(\sigma_{u'-u}((y_0,y_1))|u')$ for all $y_0,y_1\in\cY$ and all $u,u'\in\Fq$;
(2) there exist $q-1$ permutations $\{\pi_a: a\in\Fq^*\}$ on $\cY^2$ such that $W^-(y_0,y_1|u) = W^-(\pi_a((y_0,y_1))|a u)$ for all $y_0,y_1\in \cY$, $u\in\Fq$, and $a\in\Fq^*$.

We may choose $\sigma_b((y_0,y_1))=(y_0+ b, y_1)$ and $\pi_a((y_0,y_1))=(a\cdot y_0, a\cdot y_1)$. The following calculations allow us to verify this choice.
\begin{align*}
& W^-(\sigma_b((y_0,y_1))| u+b) = W^-(y_0+ b, y_1| u+b) \\
= & \frac 1 q \sum_{u_1\in\Fq} W(y_0 +b|u+b + \alpha u_1)W(y_1| u_1) \\
= & \frac 1 q \sum_{u_1\in\Fq} W(y_0|u + \alpha u_1)W(y_1| u_1) 
=  W^-(y_0,y_1  | u) , \\
& W^-(\pi_a((y_0,y_1)) | au) = W^-(a\cdot y_0, a\cdot y_1| au) \\
= & \frac 1 q \sum_{u_1\in\Fq} W(a\cdot y_0| au + \alpha u_1)W(a\cdot y_1| u_1) \\
= & \frac 1 q \sum_{u_1\in\Fq} W(a\cdot y_0| au + \alpha au_1)W(a\cdot y_1| au_1) \\
= & \frac 1 q \sum_{u_1\in\Fq} W(y_0|u + \alpha u_1)W(y_1| u_1) 
=  W^-(y_0,y_1  | u) .
\end{align*}
\end{proof}

\end{document}